\documentclass[journal,12pt,onecolumn]{IEEEtran}

\usepackage{multicol,theorem,amsmath,array}
\usepackage{amssymb}
\usepackage{latexsym}
%
\usepackage[noadjust]{cite}

\usepackage{algorithm}
\usepackage{algorithmic}

\usepackage{hyperref}

\newtheorem{theorem}{Theorem}
\newtheorem{lemma}[theorem]{Lemma}

\newtheorem{corollary}[theorem]{Corollary}
\newtheorem{example}{Example}
\newtheorem{definition}[theorem]{Definition}
\newenvironment{proof}
{\begin{trivlist}\item[]{{\sc Proof.}}}{\hfill{$\square$}\noindent\end{trivlist}
}

%
\usepackage{amssymb,amsmath}
\usepackage{mathptmx}       
\usepackage{helvet}         
\usepackage{courier}        
\usepackage{type1cm}        
\usepackage{url}
\usepackage{graphicx}        
\usepackage{multicol}        
\usepackage[bottom]{footmisc}

\usepackage{bm} 
\usepackage[shortlabels]{enumitem}
\usepackage{csquotes}




\newcommand{\F}{\mathbb{F}}
\newcommand{\Z}{\mathbb{Z}}
\newcommand{\Aut}{\mbox{Aut}}
\newcommand{\GL}{\mbox{GL}}


%
%
\makeindex             
%
%
\begin{document}

\title{Computer classification of linear codes}

\author{Iliya Bouyukliev,
  \thanks{I.~Bouyukliev is with the Institute of Mathematics and Informatics, Bulgarian Academy of Sciences, P.O.
    Box 323, Veliko Tarnovo, BULGARIA. Email:
    iliyab@math.bas.bg}
  Stefka Bouyuklieva,
  \thanks{S.~Bouyuklieva is with the Faculty of Mathematics and Informatics, St. Cyril and St. Methodius University of Veliko Tarnovo, BULGARIA. Email:
    stefka@ts.uni-vt.bg}
    and Sascha Kurz
\thanks{S.~Kurz, is with the
  Department of Mathematics, University
  of Bayreuth, Bayreuth, GERMANY. Email:
  sascha.kurz@uni-bayreuth.de}
}


%



\maketitle

\begin{abstract}
We present algorithms for classification of linear codes over finite fields,
based on canonical augmentation and on lattice point enumeration. We apply these algorithms to obtain classification results over fields with 2, 3 and 4 elements. We
validate a correct implementation of the algorithms with known classification results from the literature, which we partially extend
to larger ranges of parameters.

\begin{IEEEkeywords}
Linear code, classification, enumeration, code equivalence, lattice point enumeration, canonical augmentation.
\end{IEEEkeywords}
\end{abstract}

%
\IEEEpeerreviewmaketitle

\section{Introduction}
\label{sec_introduction}
\IEEEPARstart{L}{inear} codes play a central role in coding theory for several reasons. They permit a compact representation via
generator matrices as well as efficient coding and decoding algorithms. Due to their simplicity, i.e., they are just
subspaces of a vector space, they have numerous applications in e.g.\ algebraic geometry and other branches of mathematics
as well as biology or physics. Also multisets of points in the projective
space $\operatorname{PG}(k-1,\mathbb{F}_q)$ of cardinality $n$ correspond to linear $[n,k]_q$ codes, see
e.g.~\cite{dodunekov1998codes}. So, let $q$ be a prime power and $\mathbb{F}_q$ be the field of order $q$. A $q$-ary
linear code of length $n$, dimension $k$, and minimum (Hamming) distance $d$ is called an $[n,k,d]_q$ code.
If we do not want to specify the minimum distance $d$, then we also speak of an $[n,k]_q$ code or of an
$\left[n,k,\left\{w_1,\dots,w_l\right\}\right]_q$ if the non-zero codewords have weights in $\left\{w_1,\dots,w_l\right\}$.
If for the binary case $q=2$ all weights $w_i$ are divisible by $2$, we also speak of an even code.
We can also look at those codes as $k$-dimensional subspaces of the Hamming space $\mathbb{F}_q^n$.
Two linear codes of the same length and dimension are equivalent if one can be obtained from the
other by a sequence of the following transformations: (1) a permutation of the coordinate positions of all codewords; (2) a
multiplication of a coordinate of all codewords with a nonzero element from $\F_q$; (3) a field automorphism.
An $[n,k]_q$ code
can be represented by a generator matrix $G\in\mathbb{F}_q^{k\times n}$ whose row space gives the set of all $q^k$ codewords of the code. In the
remaining part of the paper we always assume that the length $n$ of a given linear code equals its effective length, i.e., for every coordinate
there exists a codeword with a non-zero entry in that coordinate. While a generator matrix gives a compact representation of a linear code it is
far from being unique. Special generator matrices are so-called systematic generator matrices, which contain a $k\times k$ unit matrix in the first $k$
columns. If we apply row operations of the Gaussian elimination algorithm onto a generator matrix we do not change the code itself but just its representation
via a generator matrix. Applying the transformations (1)-(3), mentioned above, we can easily see that each $[n,k]_q$ code admits an equivalent
code with a systematic generator matrix. Already in 1960 Slepian has enumerated binary linear codes for small parameters up to equivalence (or isometry) \cite{slepian1960some}. The general classification problem for $[n,k]_q$
codes has not lost its significance since then, see e.g.\ \cite{betten2006error}. In \cite{jaffe2000optimal} all optimal binary
linear $[n,k,d]_2$ codes up to length $30$ have been completely classified, where in this context optimal means that no $[n-1,k,d]_2$,
$[n+1,k+1,d]_2$, or $[n+1,k,d+1]_2$ code exists. Classification algorithms for linear codes have been presented in \cite{ostergaard2002classifying},
see also \cite[Section 7.3]{kaski2006classification}. A software package \texttt{Q-Extension} is publicly available, see \cite{bouyukliev2007q}
for a description.

The aim of this paper is to present two algorithmic variants for the classification problem for linear codes.The first one is implemented in the program \texttt{Generation} which is a part of
the software package \texttt{QextNewEdition}, and the second one is implemented in the program \texttt{LinCode}. As the implementation of such a software is a delicate issue, we exemplarily verify
several classification results from the literature and partially extend them. Both algorithms are well suited for parallelization. As mentioned in \cite{ostergaard2002classifying},
one motivation for the exhaustive enumeration of linear codes with some specific parameters is that afterwards the resulting codes can
be easily checked for further properties. Exemplarily we do here so for the number of minimal codewords of a linear code, see Subsection~\ref{subsec_applications}.

The remaining part of the paper is organized as follows. In Section \ref{sec:algorithms} we present two versions of our algorithm for canonical augmentation - extension of a generator matrix column by column or row by row. The details and the theoretical foundation of the other algorithm is given in Section~\ref{sec_extending}. Numerical enumeration and classification results for linear codes are listed in Section~\ref{sec_results}.
Finally, we draw a brief conclusion in Section~\ref{sec_conclusion}.

\section{Classification of linear codes using canonical augmentation}
\label{sec:algorithms}

The concept of canonical augmentation is introduced in \cite{kaski2006classification} and \cite{McKay}. The main idea is to construct only nonequivalent objects (in our case - inequivalent linear codes) and in this way to have a classification of these objects. The construction is recursive, it consists of steps in which the nonequivalent objects are obtained from smaller objects by expanding in a special way.  The canonical augmentation uses a canonical form to check the so called "parent test" and considers only objects that have passed the test.

The technique of canonical augmentation has been used for classification of special types of codes and related combinatorial objects in \cite{bouyukliev2006,BB38,vanEupenLizonek1997,royle1998orderly}, etc.
The algorithms in the pointed works construct objects with the needed parameters recursively starting from the empty set. In this way, to classify all linear $[n,k]_q$ codes column by column, the codes of lengths $1,2,\dots,n$ and dimensions $\le k$ are also constructed in the generation process. One of the important differences of the algorithm presented here is that we consider only codes with dimension $k$.

\subsection{Preliminaries}
\label{sec:preliminaries}

To construct all inequivalent codes with given parameters means to have one representative of each equivalence class. To do this, we use the concept for a canonical representative, selected
on the base of some specific conditions. This canonical
representative is intended to make easily a distinction between
the equivalence classes.

Let $G$ be a group acting on a set $\Omega$. This action defines an equivalence relation such that the equivalence classes are the $G$-orbits in $\Omega$. We wish to find precisely one representative of each $G$-orbit and therefore we use a so-called canonical representative map.

\begin{definition} {\rm\cite{kaski2006classification}} A canonical representative map for the
action of the group $G$ on the set $\Omega$ is a function
$\rho:\Omega\rightarrow\Omega$ that satisfies the following
two properties:
\begin{enumerate}
\item for all $X\in\Omega$ it holds that $\rho(X)\cong X$,
\item for all $X, Y\in\Omega$ it holds that $X\cong Y$ implies
$\rho(X) = \rho(Y)$.
\end{enumerate}
\end{definition}

For $X\in\Omega$, $\rho(X)$ is the canonical
form of $X$ with respect to $\rho$. Analogously, $X$ is in
canonical form if $\rho(X)=X$. The configuration $\rho(X)$ is the canonical
representative of its equivalence class with respect to $\rho$. We can take for a canonical representative of one equivalence
class a code which is more convenient for our purposes.



We take $\Omega$ to be the set of all linear $[n,k,\ge d]_q$ codes with dual distance at least $d^\perp$, and $G$ be the semidirect product $(\F_q^*\wr S_n)\rtimes_{\theta} \Aut(\F_q)$ where $\theta:\Aut(\F_q)\to \Aut(\F_q^*\wr S_n)$ is a homomorphism such that $\theta_{\alpha}((z,h))=(\alpha(z),h)$ for all $\alpha\in\Aut(\F_q)$ and $(z,h)\in \F_q^*\wr S_n$ (for more details see \cite{kaski2006classification}). The elements of $G$ fix the minimum and the dual distance of the codes. Using that $\F_q^*\wr S_n\cong Mon_n(\F_q)$ where $Mon_n(\F_q)$ is the group of the monomial $n\times n$ matrices over $\F_q$, we can consider the elements of $G$ as pairs $(M,\alpha)$, $M\in Mon_n(\F_q)$, $\alpha\in\Aut(\F_q)$. An automorphism of the linear code $C$ is a pair $(M,\alpha)\in Mon_n(\F_q)\rtimes \Aut(\F_q)$ such that $vM\alpha\in C$ for any codeword $v\in C$.
The set of all automorphisms of the code $C$ forms the automorphism group
of $C$, denoted by $\Aut(C)$. For linear codes over a prime field the transformations are of types (1) and (2) and a sequence of such transformations can be represented by a monomial matrix over the considered field. For binary codes, the transformations (2) and (3) are trivial and therefore $\Aut(C)$ is a subgroup of the symmetric group $S_n$.

We use one more group action. The automorphism group of the code $C$ acts on the set of coordinate positions and partitions them into orbits. The canonical representative map $\rho$ induces an ordering of these orbits.
 The all-zero coordinates, if there are any, form an orbit which we denote by $O_a$. If the code contains codewords of weight 1 then their supports form one orbit, say $O_b$.  The orbits for the canonical representative code $\rho(C)$ are ordered in the following way:
 $O^{(\rho)}_1$ contains the smallest integer in the set $\{1,2,\ldots,n\}\setminus (O^{(\rho)}_a\cup O^{(\rho)}_b)$, $O^{(\rho)}_2$ contains the smallest integer which is not in the set $O^{(\rho)}_a\cup O^{(\rho)}_b\cup O^{(\rho)}_1$, etc.
      If $\phi$ maps the code $C$ into its canonical form $\rho(C)$ then the permutational part $\pi_\phi$ of $\phi$
maps the orbits of $C$ into the orbits of $\rho(C)$. Obviously, $\phi(O_a)=O^{(\rho)}_a$ and $\phi(O_b)=O^{(\rho)}_b$. If $\pi_\phi(O_{i_s})=O^{(\rho)}_s$ then $O_{i_1}\prec O_{i_2}\prec \cdots\prec O_{i_m}$. We call the first orbit $O_{i_1}$ special and denote it by $\sigma(C)$.
If $\{1,2,\ldots,n\}=O_a\cup O_b$ then the code contains only codewords with weights $0$ and $1$, and in this case we do not define a special orbit.

\begin{example}
If we order the codewords in a code
lexicographically and then compare the codes according to a
lexicographical ordering of the vectors obtained by concatenation of the ordered nonzero codewords, we can take the smallest code in any equivalence class as a canonical representative. This type of canonical map is very easy to define but computationally expensive to implement.
Consider the binary code $C$ generated by the matrix
$G_C=\displaystyle\left(\begin{array}{cccc}1&0&1&1\\ 0&1&0&1\end{array}\right)$
in details.
The automorphism group of $C$ is $\Aut(C)=\{ id, (13),(24),(13)(24)\}$.
If $\Omega_C$ is the equivalence class of $C$ then $\Omega_C=\{ C_1,\ldots,C_6\}$,
$C_i=\{0,v^{(i)}_1, v^{(i)}_2, v^{(i)}_3\}$, $v^{(i)}_1\prec v^{(i)}_2\prec v^{(i)}_3$. We order the codes in $\Omega_C$ in the following way:
$$C_i\prec C_j\iff (v^{(i)}_1,v^{(i)}_2,v^{(i)}_3)\prec (v^{(j)}_1,v^{(j)}_2,v^{(j)}_3).$$
Therefore,
$C=\{0,0101,1011,1110\}\succ C_1=\{0,0011,1101,1110\}$.  Hence the code $C_1$ is the canonical form of $C$, $C_1=\rho(C)$.
The coordinates of $C_1$ are partitioned into two orbits under the action of its automorphism group, namely $O_1=\{1,2\}\prec O_2=\{3,4\}$. For the code $C$ the special orbit is $\sigma(C)=\{1,3\}$.
\end{example}

To find a canonical form of a code, we use the algorithm described in \cite{bouyukliev2007code}. Similarly to the McKay's program \texttt{nauty} \cite{mckay1990nauty}, this algorithm gives in addition to canonical form, also generating elements of the automorphism group of the considered code. Note that if the coordinates are previously partitioned according to suitable invariants, the algorithm works much faster.


Using the concept of canonical augmentation, we have developed an algorithm in two variants.

\subsection{Algorithm 1}
\label{Algorithm_1}

The first algorithm is a canonical augmentation column by column. We are looking for all inequivalent linear codes with length $n$, dimension $k$, minimum distance $\ge d$ and dual distance at least $d^\perp\ge 2$.
Without loss of generality we can consider the generator matrices in the form $(I_k\vert A)$ where $A$ is a $k\times (n-k)$ matrix. To obtain the codes we use a recursive construction starting with the identity matrix $I_k$ which generates the trivial $[k,k,1]_q$ code. In the $i$-th step we add a column to the considered generator matrices of the obtained $[k+i-1,k]_q$ codes but we take only those columns which gives codes of length $k+i$ with minimum distance $\ge d_i=d-(n-k)+i$ and dual distance at least $d^\perp$. A strategy for effective generation of these vectors (columns) is described in \cite{IliyaMaya}. Since $d\le n-k+1$, the minimum distance in the beginning is $\le 1$ (it is equal to 1 as we begin with the trivial code). The codes obtained from a code $C$ in this way form the set $Ch(C)$ and they are called the children of $C$.
We say that the code $\overline{C}\in Ch(C)$ passes the parent test, if the added coordinate belongs to the special orbit $\sigma(\overline{C})$. Moreover, we define an action of the automorphism group $\Aut(C)$ on the set of all vectors in $\F_q^k$ and take only one representative from each orbit. By $Ch^*(C)$ we denote a subset of $Ch(C)$ consisting of the codes constructed by $C$ and the taken representatives.

\begin{algorithm}[ht]
\caption{Canonical augmentation column by column}
\label{Alg1}
\begin{algorithmic}
\REQUIRE The trivial $[k,k,1]_q$ code $C_k$
\ENSURE A set $U_{n}$ of linear $[n,k,\ge d]_q$ codes with dual distance $\ge d^\perp$
\STATE \textbf{Function Augmentation}($A$: linear code of dimension $k$);
\IF {the length of $A$ is equal to $n$ }
\STATE $U_n:= U_n\cup \{A\}$;
\ELSE
\FOR {all codes $B\in Ch^*(A)$}
\IF {$B$ passes the parent test}
\STATE Augmentation($B$);
\ENDIF
\ENDFOR
\ENDIF
\STATE \textbf{Function Main};
\STATE $U_{n}=\emptyset$
\STATE Augmentation($C_k$);
\end{algorithmic}
\end{algorithm}

Using some lemmas we will prove the following theorem:
\begin{theorem}\label{thm:main1}
The set $U_n$ obtained by Algorithm \ref{Alg1}
consists of all inequivalent $[n,k,\ge d]_q$ codes with dual distance at least $d^\perp$.
\end{theorem}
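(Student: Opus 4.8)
The plan is to show that $U_n$ contains exactly one code from each equivalence class of $[n,k,\ge d]_q$ codes with dual distance $\ge d^\perp$. The argument splits into three parts, organised around the search tree built by Algorithm~\ref{Alg1}: its root is the trivial $[k,k,1]_q$ code $C_k$, and the children of a node $C$ (a code of length $m<n$) are the elements of $Ch^*(C)$ that pass the parent test. For $k\le m\le n$ put $d_m=d-(n-k)+(m-k)=d-n+m$, so that a code of length $m$ occurring in the tree has minimum distance $\ge d_m$ and dual distance $\ge d^\perp$; here $d_n=d$, while $d_k=d-(n-k)\le 1$ by the Singleton bound, which is exactly why the recursion may start from $C_k$. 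Part (a): every leaf of the tree is an $[n,k,\ge d]_q$ code with dual distance $\ge d^\perp$. This is immediate from the description of the generation step, since a column is appended only when the minimum and the dual distance stay within the prescribed bounds and no coordinate is ever deleted. The substance lies in Part (b), that every such code is equivalent to some leaf, and Part (c), that distinct leaves are inequivalent.

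The key tool is a parent map. Because $d^\perp\ge 2$, in the systematic form $(I_k\mid A)$ of a code of length $m>k$ every column beyond the first $k$ is non-zero and supports no weight-one codeword, so it lies outside the all-zero orbit $O_a$ and the weight-one-support orbit $O_b$; as this is preserved by equivalence, every code of length $m>k$ with dual distance $\ge 2$ has a well-defined special orbit $\sigma(C)$ and the degenerate case admitting no special orbit does not arise here. Define $p(C)$ to be $C$ punctured at an arbitrary coordinate of $\sigma(C)$. A first block of lemmas should establish: (i) any two punctures of $C$ at coordinates in a single $\Aut(C)$-orbit are equivalent, so $p(C)$ is well defined up to equivalence; (ii) since $\sigma(C)$ avoids $O_a\cup O_b$, puncturing at $\sigma(C)$ preserves the dimension $k$, lowers the minimum distance by at most $1$, and preserves the property that any $d^\perp-1$ columns are linearly independent, so $p(C)$ is a code of length $m-1$ with minimum distance $\ge d_{m-1}$ and dual distance $\ge d^\perp$; and (iii) the canonical map $\rho$ and the induced ordering of the $\Aut(C)$-orbits on coordinates are equivalence invariants, so that the special orbit is \emph{equivariant}: if $g$ is an equivalence with $g(C)=C'$ then $g(\sigma(C))=\sigma(C')$, whence $C\cong C'$ implies $p(C)\cong p(C')$. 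A companion lemma should capture $Ch^*$: appending columns lying in one orbit of the action of $\Aut(C)$ on $\F_q^k$ yields codes that are equivalent via an equivalence fixing the newly added coordinate, and conversely; hence $Ch^*(C)$ contains exactly one representative from each class of one-column extensions of $C$ modulo equivalences that fix the last coordinate. Finally it is convenient to restate the parent test in these terms: a child $\overline C$ of $C$, obtained by appending a column in the last coordinate, passes the test if and only if that last coordinate lies in $\sigma(\overline C)$, equivalently if and only if $C\cong p(\overline C)$.

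With this in place, Parts (b) and (c) are proved together by induction on $m$, the statement at level $m$ being that the nodes at depth $m-k$ of the search tree form a complete and irredundant system of representatives of the equivalence classes of $[m,k,\ge d_m]_q$ codes with dual distance $\ge d^\perp$. The base case $m=k$ is clear, as $C_k$ is the only such code. For completeness at level $m>k$, take a code $D$ with these parameters; replacing $D$ by an equivalent code we may assume its last coordinate lies in $\sigma(D)$, so $p(D)$ is $D$ punctured at that coordinate and has the parameters of level $m-1$. By the induction hypothesis some tree node $C$ maps onto $p(D)$ under an equivalence $g$; extending $g$ by the identity on the last coordinate realises an equivalent copy of $D$ as $C$ with one column appended, that is, as an element of $Ch(C)$. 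The companion lemma then yields $B\in Ch^*(C)$ equivalent to $D$ through an equivalence fixing the last coordinate, and equivariance of $\sigma$ forces the last coordinate of $B$ into $\sigma(B)$; thus $B$ passes the parent test and is a tree node equivalent to $D$. For irredundancy at level $m>k$, let $B_1,B_2$ be tree nodes at depth $m-k$ with $B_1\cong B_2$. Each passed the parent test, so its tree parent $C_i$, obtained by deleting the last column, is equivalent to $p(B_i)$; therefore $C_1\cong p(B_1)\cong p(B_2)\cong C_2$, and $C_1=C_2=:C$ by the induction hypothesis. Both $B_1$ and $B_2$ are one-column extensions of $C$ lying in $Ch^*(C)$; an equivalence $B_1\to B_2$, post-composed with an automorphism of $B_2$ that moves the image of the last coordinate back onto the last coordinate (possible since both last coordinates lie in the respective special orbits), becomes an equivalence fixing the last coordinate, whose restriction to the first $m-1$ coordinates is an automorphism of $C$. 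By the companion lemma the two appended columns then lie in one $\Aut(C)$-orbit on $\F_q^k$, and since $Ch^*(C)$ keeps a single representative per orbit, $B_1=B_2$. Taking $m=n$ gives Theorem~\ref{thm:main1}.

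The main obstacle is property (iii) of $p$ in the first block of lemmas, the equivariance of the special orbit $\sigma(\cdot)$ under code equivalences. Both inductions, and the equivalent form of the parent test, rest on the facts that $\rho$ is a genuine equivalence invariant, that it induces a well-defined canonical numbering of the orbits of $\Aut(C)$ on the coordinates (independent of the particular equivalence carrying $C$ to $\rho(C)$, since any two such differ by an automorphism of $\rho(C)$, which fixes every orbit setwise), and that this numbering is transported correctly by equivalences; pinning this down, together with the book-keeping involving $O_a$ and $O_b$ that guarantees $\sigma(C)\neq\emptyset$ on the codes that occur, is the delicate step, whereas the distance and dimension estimates for $p$ and the orbit-counting behind $Ch^*$ are routine once the framework is fixed.
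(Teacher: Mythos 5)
Your proposal is correct and follows essentially the same route as the paper: induction on the length, with completeness obtained by moving a coordinate of the special orbit to the last position and invoking the induction hypothesis on the punctured code, and irredundancy obtained from the facts that equivalent children passing the parent test have equivalent parents and that children of a fixed parent built from the same $\Aut(C)$-orbit of appended columns coincide in $Ch^*(C)$ (the paper's Lemmas \ref{Lemma:parent}, \ref{Lemma:equ-parents}, and \ref{lemma:ab}). Your repackaging via an explicit parent map $p$ and the equivariance of $\sigma$, together with the remark that $d^\perp\ge 2$ guarantees $\sigma(C)\neq\emptyset$ for all codes arising in the tree, makes explicit some points the paper leaves implicit, but the underlying argument is the same.
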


The main idea is to prove that Algorithm \ref{Alg1} gives a tree of codes with root the trivial code $C_k$. The codes obtained in level $i$ represents all inequivalent $[k+i,k]_q$ codes with minimum distance at least $d_i$ and dual distance at least $d^\perp$. Denote the set of these codes by $U_{k+i}$. We have to prove that all constructed codes in $U_{k+i}$ are inequivalent, and that any $[k+i,k]_q$ code with needed minimum and dual distance is equivalent to a code in this set.


The first lemma proves that the equivalence test for codes that pass the parent test and are obtained from non-equivalent parent codes is not necessary.

\begin{lemma}\label{Lemma:parent}
If $B_1$ and $B_2$ are two equivalent linear $[n,k,d]$ codes
which pass the parent test, their parent codes are also
equivalent.
\end{lemma}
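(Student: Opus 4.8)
The plan is to use property~(2) of the canonical representative map together with the fact that the canonizing isometry carries the whole ordered list of coordinate orbits of a code onto that of its canonical form, so that membership in the first (``special'') orbit is compatible with equivalence.

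Write $n$ for the common length of $B_1$ and $B_2$, so that their parent codes $P_1,P_2$ are obtained by puncturing coordinate $n$. First I would note that $B_1\cong B_2$ forces $\rho(B_1)=\rho(B_2)$; call this common code $\widehat C$ and fix $\phi_1,\phi_2\in G$ with $\phi_i(B_i)=\widehat C$. By definition, $B_i$ passes the parent test exactly when the added coordinate $n$ lies in $\sigma(B_i)$, the first orbit in the ordering of the $\Aut(B_i)$-orbits on coordinates. Since the permutational part $\pi_{\phi_i}$ sends this ordered orbit list onto the ordered list $O^{(\rho)}_a,O^{(\rho)}_b,O^{(\rho)}_1,O^{(\rho)}_2,\dots$ of $\widehat C$, and $\sigma(B_i)$ is by definition the orbit mapped to $O^{(\rho)}_1$, we get $\pi_{\phi_1}(n)\in O^{(\rho)}_1$ and $\pi_{\phi_2}(n)\in O^{(\rho)}_1$.

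Next, because $O^{(\rho)}_1$ is a single $\Aut(\widehat C)$-orbit, there is $\tau\in\Aut(\widehat C)$ whose permutational part maps $\pi_{\phi_1}(n)$ to $\pi_{\phi_2}(n)$. Then $\psi:=\phi_2^{-1}\circ\tau\circ\phi_1$ lies in $G$, satisfies $\psi(B_1)=B_2$, and has $\pi_\psi(n)=n$. As the monomial scaling and the field automorphism parts of $\psi$ act coordinatewise, restricting $\psi$ to the coordinates $\{1,\dots,n-1\}$ yields an isometry from $B_1$ punctured at $n$ to $B_2$ punctured at $n$, i.e. from $P_1$ to $P_2$, so $P_1\cong P_2$.

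In this argument the last steps — that $\phi_2^{-1}\circ\tau\circ\phi_1$ genuinely fixes coordinate $n$, and that puncturing a coordinate fixed by an isometry produces an isometry of the punctured codes — are routine verifications. The point deserving real care, and the heart of the lemma, is the orbit bookkeeping in the middle: that the canonizing map respects the entire ordered orbit list, so that the images $\pi_{\phi_1}(n)$ and $\pi_{\phi_2}(n)$ necessarily land in the same orbit $O^{(\rho)}_1$ of $\widehat C$, which is precisely what unlocks the single-orbit transitivity step.
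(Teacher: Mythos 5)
Your proof is correct and follows the same route as the paper's: both arguments reduce the lemma to producing an equivalence $\psi:B_1\to B_2$ whose permutational part fixes the added coordinate $n$, and then restricting $\psi$ to the first $n-1$ coordinates to get an equivalence of the parent codes. The paper merely asserts that the existence of such a $\psi$ follows from $n\in\sigma(B_i)$ for $i=1,2$; your explicit construction $\psi=\phi_2^{-1}\circ\tau\circ\phi_1$, using that $O^{(\rho)}_1$ is a single $\Aut(\rho(B_1))$-orbit so that $\tau$ can move $\pi_{\phi_1}(n)$ to $\pi_{\phi_2}(n)$, supplies exactly the justification the paper leaves implicit.
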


\begin{proof}
Let $B=\rho(B_1)=\rho(B_2)$ be the canonical representative of the equivalence class of the considered codes.
Since both codes pass the
parent test, then the added column is in the special orbit of both codes, or $n\in\sigma(B_i)$, $i=1,2$. This means that there is a map $\psi$ that maps $B_1$ to $B_2$ and the permutational part of $\psi$ fixes $n$-th coordinate.
Hence $\psi=(M,\alpha)$, $M=\left(\begin{array}{cc}M_1&0\\ 0&\lambda\\ \end{array}\right)\in Mon_n(\F_q)$, $\lambda\in\F_q^*$, $\alpha\in \Aut(\F_q)$, and $(M_1,\alpha)$ maps
the parent code of $B_1$ to the parent code of $B_2$. Hence both parent codes are equivalent.
\end{proof}

\begin{lemma}\label{Lemma:equ-parents}
Let $A_1$ and $A_2$ be two equivalent linear codes of length
$r$ and dimension $k$. Then for any child code $B_1$ of $A_1$ which passes the
parent test, there is a child code $B_2$ of $A_2$, equivalent
to $B_1$, such that $B_2$ also passes the parent test.
\end{lemma}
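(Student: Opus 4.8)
The plan is to exploit the equivalence between $A_1$ and $A_2$ to transport a child of $A_1$ to a child of $A_2$, and then use the canonical augmentation machinery (essentially Lemma \ref{Lemma:parent} in the other direction) to repair the parent test. First I would fix an isometry $\phi = (N,\beta) \in Mon_r(\F_q)\rtimes\Aut(\F_q)$ with $\phi(A_1) = A_2$. The child $B_1$ of $A_1$ is obtained by appending a column $v \in \F_q^k$ to the generator matrix $(I_k \mid A)$ of $A_1$ (after possibly re-systematizing); applying $\phi$ coordinatewise and appending the corresponding transformed entry gives a code $B_2'$ of length $r+1$ that is a child of $A_2$ and is equivalent to $B_1$ via the extension $\tilde\phi = \left(\begin{smallmatrix} N & 0 \\ 0 & \mu \end{smallmatrix}\right)$, $\beta$) of $\phi$ fixing the new coordinate $r+1$. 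So a child of $A_2$ equivalent to $B_1$ certainly exists; the only issue is whether it, or some equivalent sibling, passes the parent test.

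Next I would invoke the definition of $Ch^*$ and the parent test. Let $B_2 \in Ch^*(A_2)$ be the representative of the $\Aut(A_2)$-orbit of the appended column that produced $B_2'$; by construction $B_2 \cong B_2' \cong B_1$. It remains to show $B_2$ passes the parent test, i.e.\ that the appended coordinate $r+1$ lies in the special orbit $\sigma(B_2)$. Here the key point is that "passing the parent test'' is an equivalence-invariant property: since $B_1$ passes it, $r+1 \in \sigma(B_1)$, and the special orbit is defined intrinsically from the canonical form $\rho(B_1) = \rho(B_2)$ together with the image of the appended coordinate under a map to canonical form. Because $B_1 \cong B_2$ have the same canonical representative, and the appended coordinate of $B_1$ maps (under some $\psi$ realizing $B_1 \cong B_2$ composed with a canonizing map) to the same position in $\rho(B_1) = \rho(B_2)$ as does the appended coordinate of $B_2$, one reads off that $r+1 \in \sigma(B_1)$ forces $r+1 \in \sigma(B_2)$. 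This is essentially the converse direction of the argument in the proof of Lemma \ref{Lemma:parent}.

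I expect the main obstacle to be a bookkeeping subtlety rather than a conceptual one: the children in $Ch^*(A_2)$ are only taken up to the action of $\Aut(A_2)$ on columns, so the particular code $B_2'$ transported from $A_1$ need not literally be the chosen representative $B_2$. One must check that replacing $B_2'$ by the $\Aut(A_2)$-equivalent chosen representative $B_2$ preserves both the equivalence to $B_1$ (immediate, since $\Aut(A_2)$-action on a column extends to an automorphism of $A_2$ fixing the new coordinate, hence an isometry $B_2' \to B_2$ fixing coordinate $r+1$) and the parent-test status (the new coordinate is still the appended one, and the special orbit is equivalence-invariant). Once this is nailed down, the lemma follows. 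A secondary point to handle carefully is the case excluded at the end of Subsection~\ref{Algorithm_1}, where $\{1,\dots,r+1\} = O_a \cup O_b$ and no special orbit is defined; but a child that passes the parent test has its appended coordinate in a special orbit, so this degenerate case cannot arise for $B_1$ and the argument is unaffected.
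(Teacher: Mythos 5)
Your proposal is correct and follows essentially the same route as the paper's proof: transport the child $B_1$ to a child of $A_2$ by extending the isometry $A_1\to A_2$ so that it fixes the appended coordinate, then observe that composing a canonizing map of $B_1$ with the inverse of this extension canonizes $B_2$ and acts identically on the appended coordinate, so that coordinate stays in the special orbit. Your additional bookkeeping about replacing the transported child by its $Ch^*(A_2)$-representative is not needed for the lemma as stated (the paper defers that to Lemma~\ref{lemma:ab}), but it does no harm.
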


\begin{proof}  Let $G_1$ be a generator matrix of
$A_1$ in systematic form, and $A_2=\psi(A_1)$, $\psi=(M,\alpha)$, $M\in Mon_{r}(\F_q)$, $\alpha\in\Aut(\F_q)$. Let $B_1$ be the code generated by $(G_1\vert a^T)$, $a\in\F_q^k$, and
$B_2$ be the code generated by the matrix $G_2=\psi(G_1)$ and the
vector $b^T=(a^\alpha)^T$, where $a^\alpha$ is obtained from $a$ by applying the field automorphism $\alpha$ to all coordinates. Extend the map $\psi$ to $\widehat{\psi}=(\left(\begin{array}{cc}M&0\\ 0&1\\ \end{array}\right),\alpha)\in Mon_{r+1}(\F_q)\rtimes \Aut(\F_q)$ so $\widehat{\psi}(v,v_{r+1})=(vM,v_{r+1})^\alpha$.  Then $$(G_1\vert a^T)\left(\begin{array}{cc}M&0\\ 0&1\\ \end{array}\right)\alpha=
(G_1M\vert a^T)^\alpha=(G_2\vert b^T)$$
and $B_2=\widehat{\psi}(B_1)$. Hence the codes $B_1$ and $B_2$ are equivalent and so they have the same
canonical representative $B=\rho(B_1)=\rho(B_2)$.

The code $B_1$
passes the parent test and therefore the added column is in the special orbit. Since
$\phi_1\widehat{\psi}^{-1}(B_2)=\phi_1(B_1)=\rho(B_1)=\rho(B_2)$,
$\phi_2=\phi_1\widehat{\psi}^{-1}$ maps $B_2$ to its canonical form $B$. Since $\phi_2$ acts on the added coordinate in the same way as $\phi_1$, this coordinate is in the special orbit
and therefore the code
$B_2$ also passes the parent test.
\end{proof}

To see what happens with the children of the same code $C$, we have to consider the automorphism group of $C$ and the group $G=Mon_n(\F_q)\rtimes \Aut(\F_q)$ which acts on all linear $[n,k]_q$ codes (for more details on this group see \cite{HuffmanPless}).
A monomial matrix $M$ can be written either in the form $DP$ or the
form $PD_1$, where $D$ and $D_1$ are diagonal matrices and $P$ is a permutation matrix, $D_1=P^{-1}DP$. The multiplication in the group $Mon_n(\F_q)\rtimes \Aut(\F_q)$ is defined by $(D_1P_1\alpha_1)(D_2P_2\alpha_2)=(D_1(P_1D_2^{\alpha_1^{-1}}P_1^{-1})P_1P_2\alpha_1\alpha_2)$, where $B^{\alpha}$ denotes the matrix obtained by $B$ after the action of the field automorphism $\alpha$ on its elements. Obviously, $(AB)^{\alpha}=A^\alpha B^\alpha$ and $P^\alpha=P$ for any permutation matrix $P$.
Let see now what happens if we take different vectors $a, b\in\F_q^k$ and use them in the construction extending the
same linear $[n,k]_q$ code $C$ with a generator matrix
$G_C$. We define an action of the automorphism group $\Aut(C)$ of the code $C$ on the set of all vectors in $\F_q^k$. 
To any automorphism $\phi\in \Aut(C)$ we can correspond an invertible matrix $A_\phi\in
\GL(k,q)$ such that $G'=G_C\phi=A_\phi G_C$, since $G'$ is another
generator matrix of $C$. Using this connection, we obtain a homomorphism $f
\ : \ \Aut(C) \longrightarrow  \GL(k,q)\rtimes \Aut(\F_q)$, $f(M,\alpha)=(A_\phi,\alpha)$.
We have
\begin{align*}
  G_C\phi_1\phi_2 & =(A_{\phi_1}G_C)\phi_2=(A_{\phi_1}G_C)M_2\alpha_2
    =(A_{\phi_1}G_C)^{\alpha_2}M_2^{\alpha_2}\\
   &=A_{\phi_1}^{\alpha_2}G_C^{\alpha_2}M_2^{\alpha_2}=A_{\phi_1}^{\alpha_2}A_{\phi_2}G_C.
\end{align*}
Hence $A_{\phi_1\phi_2}=A_{\phi_1}^{\alpha_2}A_{\phi_2}$ and so $f(\phi_1\phi_2)=f(\phi_1)f(\phi_2)$, when the operation in the group $\GL(k,q)\rtimes \Aut(\F_q)$ is $(A,\alpha)\circ (B,\beta)=(A^\beta B,\alpha\beta)$.
Consider the action
of $Im (f)$ on the set $\F_q^{k}$ defined by $(A,\alpha)(x)=(Ax^T)^{\alpha^{-1}}$ for
every $x\in \F_q^{k}$.

\begin{lemma}\label{lemma:ab}
Let $a,b\in\F^k_q$. Suppose that $a^T$ and
$b^T$ belong to the same $Im(f)$-orbit, where $a^T$ denotes the
transpose of $a$. Then the $[n + 1, k]_q$ codes with generator matrices
$(G_C \ a^T)$ and $(G_C \ b^T)$ are equivalent and if one of them passes the parent test, the other also passes the test. Moreover, if the codes with generator matrices
$(G_C \ a^T)$ and $(G_C \ b^T)$ are equivalent and pass the parent test, the vectors $a^T$ and
$b^T$ belong to the same $Im(f)$-orbit.
\end{lemma}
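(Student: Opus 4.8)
The plan is to split the statement into its three assertions and handle them in turn, reusing the homomorphism $f$ and the action $(A,\alpha)(x)=(Ax^T)^{\alpha^{-1}}$ set up just before the lemma. First, suppose $a^T$ and $b^T$ lie in the same $Im(f)$-orbit, say $b^T=(A_\phi x^T)$ evaluated appropriately, i.e.\ there is $\phi=(M,\alpha)\in\Aut(C)$ with $(A_\phi a^T)^{\alpha^{-1}}=b^T$, equivalently $A_\phi a^T=(b^T)^{\alpha}$. I would extend $\phi$ to a monomial transformation $\widehat\phi=(\left(\begin{smallmatrix}M&0\\0&\mu\end{smallmatrix}\right),\alpha)$ on $\F_q^{n+1}$ for a suitable $\mu\in\F_q^*$ (chosen to absorb the scalar discrepancy, analogously to the block construction in the proof of Lemma~\ref{Lemma:equ-parents}), and check by the same matrix computation that $\widehat\phi$ sends the code generated by $(G_C\ a^T)$ to the code generated by $(G_C\ b^T)$, using $G_C M\alpha = A_\phi G_C$ together with $A_\phi a^T=(b^T)^\alpha$. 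This establishes equivalence of the two $[n+1,k]_q$ codes.

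For the parent-test half of the first claim, note that both codes have the same canonical representative $B=\rho$ of their common equivalence class, since they are equivalent. If $(G_C\ a^T)$ passes the parent test, some $\phi_1$ maps it to $B$ with the $(n+1)$-st coordinate in the special orbit $\sigma$. Then $\phi_1\widehat\phi^{-1}$ maps the code generated by $(G_C\ b^T)$ to $B$, and since $\widehat\phi$ was built so that its permutational part fixes coordinate $n+1$ (sending it to itself up to a scalar), the composite $\phi_1\widehat\phi^{-1}$ acts on the added coordinate exactly as $\phi_1$ does; hence the added coordinate again lands in $\sigma$ and the second code passes the parent test. This is the same argument pattern as in Lemma~\ref{Lemma:equ-parents}, so I would phrase it briefly.

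For the converse, assume the two codes $B_a=\langle(G_C\ a^T)\rangle$ and $B_b=\langle(G_C\ b^T)\rangle$ are equivalent and both pass the parent test. By Lemma~\ref{Lemma:parent} (or directly, since they share the canonical form and both have the added coordinate in the special orbit), there is an equivalence $\psi$ from $B_a$ to $B_b$ whose permutational part fixes the $(n+1)$-st coordinate. Writing $\psi=(\left(\begin{smallmatrix}M'&0\\0&\lambda\end{smallmatrix}\right),\alpha)$, its restriction $(M',\alpha)$ is an automorphism of $C$ (it maps $C$, obtained by puncturing the last coordinate of $B_a$, to $C$, obtained by puncturing $B_b$). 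Applying $f$ to this automorphism and reading off the action on the last column of the generator matrices gives $A_{\psi'} a^T = \lambda^{-1}(b^T)^{\alpha}$ or the corresponding relation after normalizing, which is exactly the statement that $a^T$ and $b^T$ are in the same $Im(f)$-orbit. I expect the main obstacle to be bookkeeping the scalar $\lambda$ and the field automorphism $\alpha$ consistently between the monomial-matrix picture and the $\GL(k,q)\rtimes\Aut(\F_q)$ action — in particular verifying that the action $(A,\alpha)(x)=(Ax^T)^{\alpha^{-1}}$ is the right normalization so that "same orbit" matches "columns related by an automorphism of $C$ up to a nonzero scalar"; everything else is a direct adaptation of the two preceding lemmas.
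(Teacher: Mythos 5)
Your proposal follows essentially the same route as the paper's proof: for the forward direction you extend $\phi\in\Aut(C)$ to a monomial map on $n+1$ coordinates and verify it carries $(G_C\ a^T)$ to $(G_C\ b^T)$, the parent-test claim is handled by composing with the map to the canonical form exactly as in Lemma~\ref{Lemma:equ-parents}, and the converse uses an equivalence whose permutational part fixes the last coordinate and restricts to an automorphism of $C$. The only step you flag but leave open --- normalizing away the scalar $\lambda$ on the added coordinate --- is resolved in the paper by observing that $(\mu^{-1}M_1,\beta)$ is still an automorphism of $C$ with $f(\mu^{-1}M_1,\beta)=((\mu^{-1})^{\beta}A,\beta)$, which sends $b^T$ to $a^T$ under the defined action, so your argument closes as expected.
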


\begin{proof}
Let the matrices $(G_C\vert a^T)$ and $(G_C\vert b^T)$  generate the codes $C_1$ and $C_2$, respectively, and
$b^T=(A_\phi a^T)^{\alpha^{-1}}$, where $\phi=(M,\alpha)\in\Aut(C)$. Then
$$\widehat{\phi}(G_C\vert b^T)=(G_CM\vert b^T)^{\alpha}=((G_CM)^{\alpha}\vert (b^T)^{\alpha})=
(A_\phi G \ A_\phi a^T)=A_\phi(G \ a^T),$$
where $\widehat{\phi}=(\left(\begin{array}{cc}M&0\\ 0&1\\ \end{array}\right),\alpha)\in Mon_{n+1}(\F_q)\rtimes \Aut(\F_q)$.
Since $A_\phi(G \ a^T)$ is another generator
matrix of the code $C_1$, both codes are equivalent. Moreover, the permutational part of $\widehat{\phi}$ fixes the last coordinate position, hence if $n+1$ is in the special orbit of $C_1$, it is in the special orbit of $C_2$ and so both codes pass (or don't pass) the parent test.

Conversely, let $C_1\cong C_2$ and both codes pass the parent test. It turns out that there is a
map $\psi=(M_{\psi},\beta)\in G$ such that $\psi(C_1)=C_2$ and $\pi_\psi(n+1)=n+1$ where $\pi_\psi$ is the permutational part of $\psi$. Hence $M_\psi=\left(\begin{array}{cc}M_1&0\\ 0&\mu\\ \end{array}\right)$ and
$$(G_C\vert a^T)M_\psi\beta=(G_C M_1\vert \mu a^T)\beta=(G_C M_1\beta\vert (\mu a^T)^\beta)=A(G_C\vert b^T).$$
It follows that $G_C M_1\beta=AG_C$ which means that $(M_1,\beta)\in\Aut(C)$, and $(\mu a^T)^\beta=Ab^T$, so $a^T=((\mu^{-1})^\beta Ab^T)^{\beta^{-1}}$. Since
$$G(\mu^{-1}M_1,\beta)=(\mu^{-1}GM_1)\beta=(\mu^{-1})^\beta (GM_1)^\beta=(\mu^{-1})^\beta AG,$$
we have $((\mu^{-1})^\beta A,\beta)=f(\mu^{-1}M_1,\beta)$.
Hence $(\mu^{-1}M_1,\beta)\in\Aut(C)$ and $a^T$ and $b^T$ belong to the same orbit under the defined action.
\end{proof}

\emph{Proof of Theorem \ref{thm:main1}:}

The algorithm starts with the trivial $[k,k,1]_q$ code $C_k=\F_q^k$. In this case $\Aut(C_k)=Mon_{k}(\F_q)\rtimes \Aut(\F_q)$ and the group partitions the set $\F_q^k$ into $k+1$ orbits as two vectors are in the same orbit iff they have the same weight. We take exactly one representative of each orbit (instead the zero vector) and extend $I_k$ with these column-vectors. If $d_1=2$, we take only the obtained $[k+1,k,2]_q$ code, otherwise we take all constructed codes and put them in the set $ch^*(C)$. All obtained codes pass the parent test.

Suppose that $U_{k+i}$ contains inequivalent $[k+i,k,\ge d_i]_q$ codes with dual distance $\ge d^\perp$, $d_i=d-n+k+i$, and any code with these parameters is equivalent to a code in $U_{k+i}$.
We will show that the set $U_{k+i+1}$ consists only of inequivalent codes, and
any linear $[k+i+1,k,\ge d_{i+1}]_q$ code is equivalent to a code in
the set $U_{k+i+1}$.

Suppose that the codes $B_1, B_2\in U_{k+i+1}$ are equivalent. Since
these two codes have passed the parent test, their parent codes are also equivalent according to Lemma \ref{Lemma:parent}.
These parent codes are linear codes from the set $U_{k+i}$
which consists only in inequivalent codes. The only option for both codes is to have the same parent. But as we take only one vector of each orbit under the considered group action, we obtain only inequivalent children from one parent code (Lemma \ref{lemma:ab}). Hence $B_1$ and $B_2$ cannot be equivalent.

Take now a linear $[k+i+1,k,\ge d_{i+1}]_q$ code $C$ with a
canonical representative $B$. If $\sigma(C)$ is the special orbit, we can reorder the coordinates of $C$ such that one of the coordinates in $\sigma(C)$ to be the last one. So we obtain a code $C_1$ that is permutational equivalent to $C$ and passes the parent test. Removing this coordinate, we obtain a parent code $C_P$ of $C_1$. Since $U_{k+i}$ consists of all
inequivalent $[k+i,k,\ge d_i]_q$ codes with dual distance $\ge d^\perp$, the parent
code $C_P$ is equivalent to a code $A\in U_{k+i}$. According to Lemma
\ref{Lemma:equ-parents}, to any child code of $C_P$ that passes the parent test, there is a child code of $A$ that also passes the test. So there is a child code $C_A$ of $A$ that passes the test, so $C_A\in U_{k+i+1}$, and $C_A$ is equivalent to $C$.
In this way we find a code in $U_{k+i+1}$ which is equivalent to $C$.

Hence in the last step we obtain all inequivalent $[n,k,\ge d]_q$ codes with the needed dual distance.

\medskip
Our goal is to get all linear $[n,k]_q$ codes with given dual distance starting from the $k\times k$ identity matrix. We can also start with all already constructed $[n'<n,k]_q$ codes to get all $[n,k]_q$ codes with the needed properties. Similar algorithms are developed in \cite{vanEupenLizonek1997,royle1998orderly} but these algorithms start from the empty set and generate all inequivalent codes of length $\le n$ and dimensions $1,2,\dots,k$.

\subsection{Algorithm 2}

The second algorithm is a canonical augmentation row by row. We start from the empty set (or set of already given codes with parameters $[n-i,k-i,d]_q$, $1\le i\le k$) and aim to construct all $[n,k,\ge d]_q\ge d^\perp$ codes. In any step we add one row and one column to the considered generator matrix. In the $i$-th step we extend the $[n-k+i-1,i-1,\ge d]_q$ codes to $[n-k+i,i,\ge d]_q$ codes.

We consider generator matrices in the form $(A\vert I_k)$. If $C$ is a linear $[n-k+s,s,\ge d]_q$ code with a generator matrix $(A\vert I_{s})$, we extend the matrix to $\left(\begin{array}{c|c|l} A& I_{s}&0^T\\
\hline a&0\ldots 0&1\\ \end{array}\right)=
\left(\left.\begin{array}{c} A\\ a\\ \end{array} \right| I_{s+1}\right)$, where $a\in\F_{n-k}$. If our aim is to construct codes with dual distance $d^\perp_k\ge d^\perp$, in the $s$-th step we need codes with dual distance $d_s^\perp\ge d^\perp-(k-s)$. The obtained $[n-k+s+1,s+1,\ge d]_q$ codes with dual distance $\ge d^\perp-(k-s)$ are the children of $C$ and the set of all such codes is denoted by $Ch(C)$. The parent test for these codes is the same as in Algorithm \ref{Algorithm_1}. We take a canonical representative for the dual code of $C$ such that $\rho(C^\perp)=\rho(C)^\perp$. The orbits of $C$ are ordered in the same way as the orbits of $C^\perp$ and the special orbit for both codes is the same. The only difference is that if $C$ is a code with zero coordinates then the orbit consisting of these coordinates coincides with the orbit of $C^\perp$ consisting of the supports of the codewords with weight $1$. As in the previous algorithm, we define a group action but now on the vectors in $\F_q^{n-k}$ and take one representative from each orbit for the construction. The corresponding set of codes is denoted by $Ch^*(C)$. Lemma \ref{Lemma:parent} and Lemma \ref{Lemma:equ-parents} hold in this case, too.

If $(A\vert I_k)$ is a generator matrix of $C$ then $(I_{n-k}\vert -A^T)$ generates $C^\perp$. So in the extension in the $s$-th step the vector $-a^T$ expands the considered generator matrix of $C^\perp$ to give a generator matrix of the extended code $\overline{C^\perp}\in Ch(C^\perp)$.
Moreover, $\Aut(C^\perp)=\{ (D^{-1}P,\alpha)\vert (DP,\alpha)\in\Aut(C)\}$. Therefore, for the action of $\Aut(C)$ on the vectors in $\F_q^{n-k}$, we use the elements of $\Aut(C^\perp)$.
If $\phi=(DP,\alpha)\in \Aut(C)$ then $\phi'=(D^{-1}P,\alpha)\in \Aut(C^\perp)$ and so we have an invertible matrix $B_\phi\in
\GL(n-k,q)$ such that $G'=(I_k\vert -A^T)\phi'=B_\phi (I_k\vert -A^T)$, since $G'$ is another
generator matrix of $C^\perp$. In this way we obtain a homomorphism $f'
\ : \ \Aut(C) \longrightarrow  \GL(n-k,q)\rtimes \Aut(\F_q)$, $f(DP,\alpha)=(B_\phi,\alpha)$.
Then we consider the action
of $Im (f')$ on the set $\F_q^{n-k}$ defined by $(B,\alpha)(x)=(Bx^T)^{\alpha^{-1}}$ for
every $x\in \F_q^{n-k}$. This action is similar to the action defined in Subsection \ref{Algorithm_1}. The proof of the following lemma for an $[n,k]$ code $C$ with a generator matrix $(A\vert I_k)$ is similar to the proof of Lemma \ref{lemma:ab}.

\begin{lemma}\label{lemma:ab2}
Let $a,b\in\F^{n-k}_q$. Suppose that $a$ and
$b$ belong to the same $Im(f')$-orbit. Then the $[n + 1, k+1]_q$ codes with generator matrices
$\left(\left.\begin{array}{c} A\\ a\\ \end{array} \right| I_{k+1}\right)$ and $\left(\left.\begin{array}{c} A\\ b\\ \end{array} \right| I_{k+1}\right)$ are equivalent and if one of them passes the parent test, the other also passes the test. Moreover, if the codes with generator matrices
$\left(\left.\begin{array}{c} A\\ a\\ \end{array} \right| I_{k+1}\right)$ and $\left(\left.\begin{array}{c} A\\ b\\ \end{array} \right| I_{k+1}\right)$ are equivalent and pass the parent test, the vectors $a$ and
$b$ belong to the same $Im(f')$-orbit.
\end{lemma}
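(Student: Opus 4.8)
The plan is to mirror the proof of Lemma~\ref{lemma:ab} almost verbatim, translating every statement about columns $a^T,b^T\in\F_q^k$ attached to $(G_C\mid a^T)$ into the dual statement about rows $a,b\in\F_q^{n-k}$ attached to the bottom of $(A\mid I_k)$. The key observation making this translation work is the explicit dictionary already set up in the text: a generator matrix $(A\mid I_k)$ of the code $C$ corresponds to the generator matrix $(I_{n-k}\mid -A^T)$ of $C^\perp$, the map $f'$ is built from $f$ applied to $\Aut(C^\perp)$, and crucially the special orbit of $C$ coincides with the special orbit of $C^\perp$ by construction of $\rho$. So ``passing the parent test'' for the child $\left(\left.\begin{smallmatrix} A\\ a\end{smallmatrix}\right|I_{k+1}\right)$ is equivalent to ``the new coordinate lies in the special orbit'', which by the $C$--$C^\perp$ correspondence is the same condition on the dual side where it literally becomes the hypothesis of Lemma~\ref{lemma:ab}.

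First I would prove the forward direction. Assume $a$ and $b$ lie in the same $Im(f')$-orbit, say $b^T=(B_\phi a^T)^{\alpha^{-1}}$ with $\phi=(DP,\alpha)\in\Aut(C)$ and $\phi'=(D^{-1}P,\alpha)\in\Aut(C^\perp)$. I would exhibit the extension $\widehat{\phi}=\left(\left(\begin{smallmatrix}M&0\\0&1\end{smallmatrix}\right),\alpha\right)\in Mon_{n+1}(\F_q)\rtimes\Aut(\F_q)$ acting on the row-augmented matrix $\left(\left.\begin{smallmatrix} A\\ a\end{smallmatrix}\right|I_{k+1}\right)$, and compute — exactly as in Lemma~\ref{lemma:ab} — that the result is another generator matrix of the code generated by $\left(\left.\begin{smallmatrix} A\\ b\end{smallmatrix}\right|I_{k+1}\right)$ (here one uses $G' = B_\phi(I\mid -A^T)$ to identify the rows), whence the two codes are equivalent. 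Since the permutational part of $\widehat{\phi}$ fixes the last coordinate, membership in the special orbit is preserved, so one code passes the parent test iff the other does.

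For the converse I would argue as in the second half of Lemma~\ref{lemma:ab}: if the two codes $C_1,C_2$ are equivalent and both pass the parent test, there is $\psi=(M_\psi,\beta)\in G$ with $\psi(C_1)=C_2$ whose permutational part fixes coordinate $n+1$, hence $M_\psi$ is block-diagonal $\left(\begin{smallmatrix}M_1&0\\0&\mu\end{smallmatrix}\right)$. Working with the dual generator matrices and reading off the blocks, one sees that $(M_1,\beta)$ restricts to an automorphism of $C$ and that the induced action of $f'(M_1',\beta)$ on $\F_q^{n-k}$ carries $a^T$ to $b^T$ up to the scalar $\mu$ — which, after absorbing $\mu^{-1}$ into the diagonal part exactly as in the displayed computation $G(\mu^{-1}M_1,\beta)=(\mu^{-1})^\beta AG$ of Lemma~\ref{lemma:ab}, shows $a$ and $b$ lie in the same $Im(f')$-orbit.

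The only genuine subtlety — and the one point I would state carefully rather than wave at — is the bookkeeping of the field automorphism $\beta$ and the scalar $\mu$ when passing between $C$ and $C^\perp$: transposition interacts with the monomial group, and one must check that the diagonal part $D$ of an automorphism $(DP,\alpha)$ of $C$ becomes $D^{-1}$ on $C^\perp$ while the permutation and field parts are unchanged (this is the stated identity $\Aut(C^\perp)=\{(D^{-1}P,\alpha)\mid (DP,\alpha)\in\Aut(C)\}$), so that the group action of $Im(f')$ is genuinely the right one. Once that correspondence is invoked, every remaining step is a transcription of Lemma~\ref{lemma:ab}, and I would simply say so rather than repeat the algebra.
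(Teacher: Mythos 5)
Your proposal is correct and matches the paper's intent exactly: the paper omits the proof, stating only that it is ``similar to the proof of Lemma~\ref{lemma:ab}'', and your transcription via the duality dictionary $(A\mid I_k)\leftrightarrow(I_{n-k}\mid -A^T)$, $\Aut(C^\perp)=\{(D^{-1}P,\alpha)\mid (DP,\alpha)\in\Aut(C)\}$, and the shared special orbit is precisely the intended argument. The one detail you could make explicit is that the dual-side extension column is $-a^T$ rather than $a^T$, but this scalar is absorbed by the monomial action and does not affect the orbit statement.
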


The proof that Algorithm 2 gives the set $U_n$ of all inequivalent $[n,k,\ge d]_q$ codes with dual distance $\ge d^\perp$ is similar to the proof of Theorem \ref{thm:main1}, therefore we skip it.


%

\subsection{Some details}

The parent test is an expensive part of the algorithms. That's way we use invariants to
take information about the orbits $\{O_1,\ldots,O_m\}$ after the action of $\Aut(C)$ on the set of coordinate positions.
An invariant of the coordinates of $C$ is a function $f:
N\to\Z$ such that if $i$ and $j$ are in the same orbit with
respect to $\Aut(C)$ then $f(i)=f(j)$, where $N=\{1,2,\dots,n\}$ is the set of the coordinate positions.
The code $C$ and the invariant $f$ define a partition $\pi= \{
N_1,N_2,\dots,N_l\}$ of the coordinate set $N$,  such that
$N_i\cap N_j=\emptyset$ for $i\not =j$, $N=N_1\cup
N_2\cup\dots\cup N_l$, and two coordinates $i,j$ are in the same
subset of $N \iff  f(i)= f(j)$. So the subsets $N_i$ are unions of
orbits, therefore we call them pseudo-orbits. We can use the fact
that if we take two coordinates from two different subsets, for
example $s\in N_i$ and $t\in N_j$, $N_i\cap N_j=\emptyset$, they
belong to different orbits under the action of $\Aut(C)$ on the
coordinate set $N$. Moreover, using an invariant $f$, we can
define a new canonical representative and a new special orbit of $C$ in the following way.
If $f_i=f(j_i)$ for $j_i\in N_i$, $i=1,2,\dots,l$, we can order the pseudo-orbits in respect to the integers $f_i$. We take for a canonical representative a code for which $f_1<f_2<\cdots <f_l$. Moreover, we order the orbits in one pseudo-orbit as it is described in Section \ref{sec:preliminaries}. So the orbits in the canonical representative are ordered according this new ordering. The special orbit for a code $C$ is defined in the same way as in Section \ref{sec:preliminaries} (only the canonical map and the canonical representative may be different).

In the step "\textit{if $B$ passes the parent test}", using a
given generator matrix of the code $B$ we have to calculate
invariants, and in some cases also canonical form and the
automorphism group $\Aut(B)$. Finding a canonical form and the
automorphism group is necessary when the used invariants are not
enough to prove whether the code $B$ pass or not the parent test.
If the code $B$ passes the parents test, the algorithm needs a set
of generators of $\Aut(B)$ for the next step (finding the child
codes).
Description of some very effective invariants and the process of their applications are described in details in \cite{bouyukliev2007code} and \cite{mckay1990nauty}.

Similar algorithms can be used to construct linear codes with a prescribed fixed part - a residual code or a subcode.

\section{Extending linear codes via lattice point enumeration}
\label{sec_extending}

As mentioned in the introduction, we represent an $[n,k]_q$ code by a systematic generator matrix $G\in\mathbb{F}_q^{k\times n}$, i.e.,
$G$ is of the form $G=\left(I_k|R\right)$, where $I_k$ is the $k\times k$ unit matrix and $R\in\mathbb{F}_q^{k\times (n-k)}$. While this
representation is quite compact, it nevertheless can cause serious storage requirements if the number of codes get large.

Our general strategy to enumerate linear codes is to start from a (systematic) generator matrix $G$ of a code and to extend $G$ to a
generator matrix $G'$ of a {\lq\lq}larger{\rq\rq} code. Of course, there are several choices how the shapes of the matrices $G$ and $G'$
can be chosen, see e.g.\ \cite{ostergaard2002classifying} for some variants. Here we assume the form
$$
  G'=\begin{pmatrix}
    I_k & 0\dots 0  & R \\
    0   & \underset{r}{\underbrace{1\dots 1}} & \star
  \end{pmatrix}
$$
where $G=\left(I_k|R\right)$ and $r\ge 1$. Note that if $G$ is a systematic generator matrix of an $[n,k]_q$ code, then $G'$ is a
systematic generator matrix of an $[n+r,k+1]_q$ code. Typically there will be several choices for the $\star$s and some of these can
lead to equivalent codes. So, in any case we will have to face the problem that we are given a set $\mathcal{C}$ of linear codes and
we have 	to sift out all equivalent copies. 
A classical approach for this problem is to reformulate the linear code as
a graph, see \cite{bouyukliev2007code}, and then to compare canonical forms of graphs using the software package \texttt{Nauty}
\cite{mckay1990nauty}, see also \cite{ostergaard2002classifying}. In our software we use the implementation from \texttt{Q-Extension}
as well as another direct algorithmic approach implemented in the software \texttt{CodeCan} \cite{feulner2009automorphism}. In our
software, we can switch between these two tools to sift out equivalent copies and we plan to implement further variants. The reason
to choose two different implementations for the same task is to independently validate results.

It remains to solve the extension problem from a given generator matrix $G$ to all possible extension candidates $G'$.  To this end we
utilize the geometric description of the linear code generated by $G$ as a multiset $\mathcal{M}$ of points in $\operatorname{PG}(k-1,\mathbb{F}_q)$,
where
$$\mathcal{M}=\left\{\left\{  \langle g^i\rangle\,:\, 1\le i\le n  \right\}\right\},\footnote{We use the notation $\{\{\cdot\}\}$ to emphasize that
  we are dealing with multisets and not ordinary sets. A more precise way to deal with a multiset $\mathcal{M}$ in $\operatorname{PG}(k-1,\mathbb{F}_q)$
  is to use a characteristic function $\chi$ which maps each point $P$ of $\operatorname{PG}(k-1,\mathbb{F}_q)$ to an integer, which is the number
  of occurences of $P$ in $\mathcal{M}$. With this, the cardinality $\#\mathcal{M}$ can be writen as the sum over $m(P)$ for all points
  $P$ of $\operatorname{PG}(k-1,\mathbb{F}_q)$.}
$$
$g^i$ are the $n$ columns of $G$, and $\langle v\rangle$ denotes the row span of a column vector $v$. In general, the $1$-dimensional subspaces of
$\mathbb{F}_q^k$ are the points of $\operatorname{PG}(k-1,\mathbb{F}_q)$. The $(k-1)$-dimensional subspaces of $\mathbb{F}_q^k$ are called the
hyperplanes of $\operatorname{PG}(k-1,\mathbb{F}_q)$. By $m(P)$ we denote the multiplicity of
a point $P\in\mathcal{M}$. We also say that a column $g^i$ of the generator matrix has multiplicity $m(P)$, where $P=\langle g^i\rangle$ is the
corresponding point, noting that the counted columns can differ by a scalar factor. Similarly, let $\mathcal{M}'$ denote the multiset of points
in $\operatorname{PG}((k+1)-1,\mathbb{F}_q)$ that corresponds to the code generated by the generator matrix $G'$. Note that our notion of equivalent
linear codes goes in line with the notion of equivalent multisets of points in projective spaces, see \cite{dodunekov1998codes}. Counting column
multiplicities indeed partially takes away the inherent symmetry of the generator matrix of a linear code, i.e., the ordering of the columns and
multiplications of columns with non-zero field elements is not specified explicitly any more. If the column multiplicity of every column is exactly one,
then the code is called projective.

Our aim is to reformulate the extension problem $G\rightarrow G'$ as an enumeration problem of integral points in a polyhedron. 
Let $W\subseteq \{i\Delta\,:a\le i\le b\}\subseteq \mathbb{N}_{\ge 1}$ be a set of feasible weights for the non-zero codewords, where we assume $1\le a\le b$
and $\Delta\ge 1$.\footnote{Choosing $\Delta=1$ such a representation is always possible. Moreover, in many applications we can choose $\Delta>1$ quite
naturally. I.e., for optimal binary linear $[n,k,d]_2$ codes with even minimum distance $d$, i.e., those with maximum possible $d$,  we can always assume
that there exists an \emph{even} code, i.e., a code where all weights are divisible by $2$.} Linear codes where all weights of the codewords are divisible
by $\Delta$ are called $\Delta$-divisible and introduced by Ward, see e.g.~\cite{ward2001divisible,ward1981divisible}.

The non-zero codewords of the code generated by the generator
matrix $G$ correspond to the non-trivial linear combinations of the rows of $G$ (over $\mathbb{F}_q$). In the geometric setting, i.e., where an $[n,k]_q$ code $C$
is represented by a multiset $\mathcal{M}$, each non-zero codeword $c\in C$ corresponds to a hyperplane $H$ of the projective space
$\operatorname{PG}(k-1,\mathbb{F}_q)$. (More precisely, $\mathbb{F}_q^*\cdot c$ is in bijection to $H$, where $\mathbb{F}_q^*=\mathbb{F}_q\backslash\{0\}$.)
With this, the Hamming weight of a codeword $c$ is given by $$n-\sum_{P\in\operatorname{PG}(k-1,\mathbb{F}_q)\,:\,P \in \mathcal{M},\,P\le H} m(P),$$ see
\cite{dodunekov1998codes}. By $\mathcal{P}_{k}$ we denote the set of points of $\operatorname{PG}(k-1,\mathbb{F}_q)$ and by $\mathcal{H}_k$ the set of hyperplanes.

\begin{lemma}
  \label{lemma_ILP}
  Let $G$ be a systematic generator matrix of an $[n,k]_q$ code $C$ whose non-zero weights are contained in $\{i\Delta\,:a\le i\le b\}\subseteq \mathbb{N}_{\ge 1}$.
  By $c(P)$ we denote the number of columns of $G$ whose row span equals $P$ for all points $P$ of $\operatorname{PG}(k-1,\mathbb{F}_q)$ and set $c(\mathbf{0})=r$
  for some integer $r\ge 1$. With this let $\mathcal{S}(G)$ be the set of feasible solutions of
  \begin{eqnarray}
    \Delta y_H+\sum_{P\in\mathcal{P}_{k+1}\,:\,P\le H} x_P =n-a\Delta&&\forall H\in\mathcal{H}_{k+1}\label{eq_hyperplane}\\
    \sum_{q\in\mathbb{F}_q} x_{\langle (u |q)\rangle } =c(\langle u\rangle ) && \forall \langle u\rangle \in\mathcal{P}_k \cup\{\mathbf{0}\} \label{eq_c_sum}\\
    x_{\langle e_i\rangle}\ge 1&&\forall 1\le i\le k+1\label{eq_systematic}\\
    x_P\in \mathbb N &&\forall P\in\mathcal{P}_{k+1}\\
    y_H\in\{0,...,b-a\} && \forall H\in\mathcal{H}_{k+1}\label{hyperplane_var},
  \end{eqnarray}
  where $e_i$ denotes the $i$th unit vector in $\mathbb{F}_q^{k+1}$. Then, for every systematic generator matrix $G'$ of an $[n+r,k+1]_q$ code $C'$
  whose first $k$ rows coincide with $G$ and whose weights of its non-zero codewords are contained in $\{i\Delta\,:\, a\le i\le b\}$, we have a solution
  $(x,y)\in\mathcal{S}(G)$ such that $G'$ has exactly $x_P$ columns whose row span is equal
  to $P$ for each $P\in\mathcal{P}_{k+1}$.
\end{lemma}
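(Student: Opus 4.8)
The plan is to read a feasible solution of $\mathcal{S}(G)$ directly off $G'$: for each $P\in\mathcal{P}_{k+1}$ let $x_P$ be the number of columns of $G'$ whose row span is $P$, and for each $H\in\mathcal{H}_{k+1}$ let $c_H$ be a non-zero codeword of $C'$ whose associated hyperplane (in the sense recalled before the lemma) is $H$, and set $y_H:=(\operatorname{wt}(c_H)-a\Delta)/\Delta$. Once $(x,y)\in\mathcal{S}(G)$ has been verified the lemma is immediate, since $G'$ has, by the definition of $x$, exactly $x_P$ columns with row span $P$ for every $P$. So the whole task is to check \eqref{eq_hyperplane}--\eqref{hyperplane_var} for this pair.

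I would clear the cheap constraints first. The $x_P$ are nonnegative integers since they count columns, and $x_{\langle e_i\rangle}\ge 1$ for $1\le i\le k+1$ because $G'$ is systematic and hence carries $I_{k+1}$ among its columns; this is \eqref{eq_systematic}. For \eqref{hyperplane_var}, the codeword-hyperplane correspondence recalled before the lemma is a bijection between the non-zero codeword classes of $C'$ and $\mathcal{H}_{k+1}$ (both of cardinality $(q^{k+1}-1)/(q-1)$), so every $c_H$ is a non-zero codeword; by hypothesis $\operatorname{wt}(c_H)\in\{i\Delta:a\le i\le b\}$, hence $y_H$ is a well-defined integer with $0\le y_H\le b-a$. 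For \eqref{eq_c_sum} I would project each column of $G'$ onto its first $k$ coordinates: by the hypothesis on the first $k$ rows of $G'$, deleting the last row turns the columns of $G'$ into the columns of $G$ together with $r$ zero columns (this is what $c(\mathbf{0})=r$ records). A column with row span $\langle(u\mid\lambda)\rangle$, $u\ne 0$, projects to a column spanning $\langle u\rangle$, and as $\lambda$ ranges over $\mathbb{F}_q$ the $q$ distinct points $\langle(u\mid\lambda)\rangle$ are exactly the points of $\mathcal{P}_{k+1}$ lying over $\langle u\rangle$; summing their multiplicities counts the $c(\langle u\rangle)$ columns of $G$ with span $\langle u\rangle$, which is \eqref{eq_c_sum} for $\langle u\rangle\in\mathcal{P}_k$. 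For $\langle u\rangle=\mathbf{0}$ the only point of $\mathcal{P}_{k+1}$ lying over it is $\langle e_{k+1}\rangle$, and the $r$ columns of $G'$ with vanishing first $k$ entries all span $\langle e_{k+1}\rangle$, so $x_{\langle e_{k+1}\rangle}=r=c(\mathbf{0})$.

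The substantive step is \eqref{eq_hyperplane}. Let $\mathcal{M}'$ be the point multiset of $C'$, so that $m'(P)=x_P$ and $\#\mathcal{M}'$ equals the length $n+r$ of $C'$. By the weight formula recalled before the lemma, the codeword $c_H$ attached to $H\in\mathcal{H}_{k+1}$ satisfies $\operatorname{wt}(c_H)=\#\mathcal{M}'-\sum_{P\le H}m'(P)=(n+r)-\sum_{P\in\mathcal{P}_{k+1}:\,P\le H}x_P$. Combining this with $\operatorname{wt}(c_H)=(a+y_H)\Delta$ and rearranging yields $\Delta y_H+\sum_{P\le H}x_P=(n+r)-a\Delta$, which is exactly \eqref{eq_hyperplane} (with the length $n+r$ of $C'$ on the right-hand side). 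Hence $(x,y)$ meets all the constraints, so $(x,y)\in\mathcal{S}(G)$, and it has the property asserted in the lemma.

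The difficulty here is bookkeeping rather than ideas: one must make sure the codeword-hyperplane correspondence is onto all of $\mathcal{H}_{k+1}$, so that the weight hypothesis on $C'$ applies to every term of \eqref{eq_hyperplane}, and one must track the point $\langle e_{k+1}\rangle$ of multiplicity $r$ consistently in \eqref{eq_c_sum} and in the length entering \eqref{eq_hyperplane}. With the dictionary between codewords, hyperplanes and point multisets in place, each of the five verifications is a one-line substitution.
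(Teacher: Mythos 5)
Your proof is correct and takes essentially the same route as the paper's, which is just a terser constraint-by-constraint verification of the solution read off from the column multiplicities of $G'$. You are also right that the right-hand side of (\ref{eq_hyperplane}) must be $(n+r)-a\Delta$ rather than $n-a\Delta$: the statement as printed has a typo there, confirmed by the paper's own worked example where $n=6$, $r=1$, $a\Delta=4$ and the right-hand side is $3$.
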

\begin{proof}
  Let such a systematic generator matrix $G'$ be given and $x_P$ denote the number of columns of $G'$ whose row span is  equal to $P$ for all points $P\in\mathcal{P}_{k+1}$.
  Since $G'$ is systematic, Equation~(\ref{eq_systematic}) is satisfied. As $G'$ arises by appending a row to $G$, also Equation~(\ref{eq_c_sum}) is satisfied
  for all $P\in \mathcal{P}_k$. For $P=\mathbf{0}$ Equation~(\ref{eq_c_sum}) is just the specification of $r$. Obviously, the $x_P$ are non-negative
  integers. The conditions (\ref{eq_hyperplane}) and (\ref{hyperplane_var}) correspond to the restriction that the weights are contained in
  $\{i\Delta\,:\, a\le i\le b\}$.
\end{proof}
We remark that some of the constraints (\ref{eq_hyperplane}) are automatically satisfied since the subcode $C$ of $C'$ satisfies all constraints on the weights. If there
are further forbidden weights in $\{i\Delta\,:a\le i\le b\}$ then, one may also use the approach of Lemma~\ref{lemma_ILP}, but has to filter out the integer solutions that
correspond to codes with forbidden weights. Another application of this first generate, then filter strategy is to remove some of the constraints (\ref{eq_hyperplane}), which
speeds up, at least some, lattice point enumeration algorithms. In our implementation we use \texttt{Solvediophant} \cite{wassermann2002attacking}, which is based on the
LLL algorithm \cite{lenstra1982factoring}, to enumerate the integral points of the polyhedron from Lemma~\ref{lemma_ILP}.

Noting that each $[n',k',W]_q$ code, where $W\subseteq \mathbb{N}$ is a set of weights, can indeed be obtained by extending\footnote{This operation is also called \emph{lengthening}
in the coding theoretic literature, i.e.,
both the effective length $n$ and the dimension $k$ is increased, while one usually assumes that the redundancy $n-k$ remains fix. The reverse operation is called
\emph{shortening}.} all possible $[n'-r,k'-1,W]_q$ codes via Lemma~\ref{lemma_ILP}, where $1\le r\le n'-k'+1$,
already gives an algorithm for enumerating and classifying $[n',k',W]_q$ codes. (For $k'=1$ there exists a unique code for each weight $w\in W$, which admits
a generator matrix consisting of $w$ ones.) However, the number of codes $C$ with generator matrix $G$ that yield the same $[n',k',W]_q$ code $C'$ with generator
matrix $G'$ can grow exponentially with $k'$. We can limit this growth a bit by studying the effect of the extension operation and its reverse on some code invariants.

\begin{lemma}
  \label{lemma_shortening}
  Let $C'$ be an $[n',k',W]_q$ code with generator matrix $G'$. If $G'$ contains a column $g'$ of multiplicity $r\ge 1$, then there exists a generator matrix $G$ of an $[n'-r,k'-1,W]_q$
  code $C$ such that the extension of $G$ via Lemma~\ref{lemma_ILP} yields at least one code that is equivalent to $C'$.
  Moreover, if $\Lambda$ is the maximum column multiplicity of $G'$, without counting the columns whose row span equals $\langle g'\rangle$, then the maximum column multiplicity of
  $G$ is at least $\Lambda$.
\end{lemma}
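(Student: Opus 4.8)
The plan is to exhibit $G'$, up to code equivalence, as an extension of a systematic matrix $G$ in exactly the shape fed into Lemma~\ref{lemma_ILP}, and then to read off the assertion from that lemma. Put $k=k'-1$ and $n=n'-r$; the degenerate case $k'=1$ (where $\langle g'\rangle$ is the only point of $\operatorname{PG}(k'-1,\mathbb{F}_q)$, so $r=n'$) is trivial and is set aside. Since $g'$ has multiplicity $r$, exactly $r$ columns of $G'$ span $\langle g'\rangle$; scaling each of them (a code equivalence) we may assume they are all literally equal to a fixed vector, and after a change of basis (which does not change $C'$) we may assume this vector is $e_{k+1}=(0,\dots,0,1)^{\top}$; permuting coordinates, we may assume these $r$ columns occupy positions $k+1,\dots,k+r$.

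Let $\widetilde{G}$ be the $k\times n$ matrix formed by the first $k$ rows of $G'$ in the remaining $n$ positions, and let $\pi\colon\mathbb{F}_q^{k+1}\to\mathbb{F}_q^{k}$ be the projection deleting the last coordinate. I claim $\widetilde{G}$ has rank $k$: none of those $n$ columns is a multiple of $e_{k+1}$, so if $V\subseteq\mathbb{F}_q^{k+1}$ denotes their span then $V+\langle e_{k+1}\rangle$ contains all columns of $G'$ and hence equals $\mathbb{F}_q^{k+1}$, whence $\pi(V)=\mathbb{F}_q^{k}$, which is exactly the column space of $\widetilde{G}$. Now choose $k$ of those $n$ columns forming a basis of the column space of $\widetilde{G}$, move them to positions $1,\dots,k$ by a coordinate permutation, apply a change of basis to the first $k$ rows of $G'$ so that the top $k$ entries of these columns become $e_1,\dots,e_k$, and finally subtract suitable multiples of the first $k$ rows from the last row to clear its first $k$ entries. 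None of these operations changes $C'$, disturbs the columns $e_{k+1}$ sitting in positions $k+1,\dots,k+r$, or alters a row other than the one being reduced, so $G'$ now has the form
\[
  G'=\begin{pmatrix} I_k & 0\dots0 & R\\ 0 & 1\dots1 & \star\end{pmatrix},
\]
with $R\in\mathbb{F}_q^{k\times(n-k)}$, $\star\in\mathbb{F}_q^{1\times(n-k)}$, and every column of $R$ nonzero, since it is a $\pi$-image of a non-special column of $G'$.

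Set $G:=(I_k\,|\,R)$, a systematic generator matrix of an $[n'-r,k'-1]_q$ code $C$. For every nonzero $x\in\mathbb{F}_q^{k}$ the nonzero codeword $(x,0)G'$ of $C'$ has the same Hamming weight as $xG$, because its $r$ middle coordinates vanish; hence every nonzero weight of $C$ lies in $W$, so $C$ is an $[n'-r,k'-1,W]_q$ code. By construction $G'$ is exactly of the form considered in Lemma~\ref{lemma_ILP} as an extension of $G$ with $c(\mathbf{0})=r$, so that lemma supplies a solution $(x,y)\in\mathcal{S}(G)$ whose entries $x_P$ record the column multiplicities of $G'$; this solution corresponds to the point multiset of $C'$, hence the extension of $G$ via Lemma~\ref{lemma_ILP} yields a code equivalent to $C'$ (equivalence of linear codes corresponding to equivalence of point multisets, cf.~\cite{dodunekov1998codes}). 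For the final assertion, observe that the columns of $G$ are, up to reordering, the images of the $n$ non-special columns of $G'$ under $\pi$ followed by the fixed change of basis; as none of these columns lies in $\ker\pi=\langle e_{k+1}\rangle$, this map carries proportional columns to proportional nonzero columns, so the multiplicity in $G$ of any point is a sum of $G'$-multiplicities over the points mapping to it. Applying this to a point $P^{\ast}\neq\langle g'\rangle$ of maximal $G'$-multiplicity $\Lambda$ shows that some point has $G$-multiplicity at least $\Lambda$.

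The only genuinely delicate step is the rank claim for $\widetilde{G}$, i.e.\ that deleting the $r$ copies of $g'$ still leaves enough columns to recover, after projection, a full-rank $k\times n$ generator matrix. Everything else is bookkeeping: verifying that the normalizing equivalences neither move the $r$ special columns out of place nor change any column multiplicities, and that the effective length and weight set of $C$ come out as claimed.
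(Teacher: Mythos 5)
Your proposal is correct and follows essentially the same route as the paper's proof: normalize the $r$ columns spanning $\langle g'\rangle$ to a unit vector, delete those columns together with the corresponding row to obtain a systematic $G$, check that the weights survive and that Lemma~\ref{lemma_ILP} recovers $G'$, and note that this projection can only merge (hence not decrease) column multiplicities. You merely make explicit two points the paper leaves implicit, namely the rank of the truncated matrix and the bookkeeping of the normalizing equivalences.
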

\begin{proof}
  Consider a transform $\tilde{G}$ of $G'$ such that the column $g'$ of $G'$ is turned into the $j$th unit vector $e_j$ for some integer $1\le j\le k'$. Of course also $\tilde{G}$ is
  a generator matrix of $C'$. Now let $\hat{G}$ be the $(k'-1)\times (n'-r)$-matrix over $\mathbb{F}_q$ that arises from $\tilde{G}$ after removing the $r$ occurrences of the
  columns with row span $\langle e_j\rangle$ and additionally removing the $j$th row. Note that the non-zero weights of the linear code generated by $\hat{G}$ are also contained
  in $W$. If $G$ is a systematic generator matrix of the the linear code $C$ generated by $\hat{G}$, then Lemma~\ref{lemma_ILP} applied to $G$ with the chosen parameter $r$ yields
  especially a linear code with generator matrix $G'$ as a solution. By construction the effective length of $C$ is indeed $n'-r$. Finally, note that removing a row from a generator
  matrix does not decrease column multiplicities.
\end{proof}

\begin{corollary}
  \label{cor_shortening}
  Let $C'$ be an $[n',k',W]_q$ code with generator matrix $G'$ and minimum column multiplicity $r$. Then there exists a generator matrix $G$ of an $[n'-r,k'-1,W]_q$
  code $C$ with minimum column multiplicity at least $r$ such that the extension of $G$ via Lemma~\ref{lemma_ILP} yields at least one code that is equivalent $C'$.
\end{corollary}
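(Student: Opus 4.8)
The plan is to reduce the statement directly to Lemma~\ref{lemma_shortening} by choosing the contracted column wisely. First I would let $r$ denote the minimum column multiplicity of $G'$ and pick a concrete column $g'$ of $G'$ whose multiplicity equals exactly $r$; such a column exists by definition of the minimum, and every column of $G'$ has multiplicity at least $r$. Applying Lemma~\ref{lemma_shortening} to $G'$ with this particular $g'$ already yields a generator matrix $G$ of an $[n'-r,k'-1,W]_q$ code $C$ whose extension via Lemma~\ref{lemma_ILP} (with the same parameter $r$) produces at least one code equivalent to $C'$. Thus the only thing still to be checked is the lower bound $r$ on the minimum column multiplicity of $G$.

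For that I would revisit the construction carried out inside the proof of Lemma~\ref{lemma_shortening}: there, $G$ is a systematic generator matrix of the code generated by a matrix $\hat G$, which is obtained from a transform $\tilde G$ of $G'$ (one in which $g'$ has been turned into a unit vector $e_j$) by deleting all $r$ columns with row span $\langle e_j\rangle$ and then deleting the $j$th row. Since $\tilde G$ is another generator matrix of $C'$, its multiset of column multiplicities coincides with that of $G'$; in particular every column of $\tilde G$ other than the $r$ copies of $\langle e_j\rangle$ has multiplicity at least $r$. The columns of $\hat G$ are precisely these columns with the $j$th coordinate removed. Deleting a coordinate can only make previously distinct projective points coincide, never the other way around, so it cannot decrease any column multiplicity — this is exactly the monotonicity observation invoked at the end of the proof of Lemma~\ref{lemma_shortening}. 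Hence every column of $\hat G$ has multiplicity at least $r$, and since passing from $\hat G$ to a systematic generator matrix $G$ only involves row operations together with a column permutation and scaling, which preserve column multiplicities, the minimum column multiplicity of $G$ is at least $r$.

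I do not expect any serious obstacle here: the corollary is essentially the ``minimum'' counterpart of the ``moreover'' clause of Lemma~\ref{lemma_shortening}, and the same row-deletion monotonicity does the work. The one point that needs a little care is that the bound $r$ must hold simultaneously for \emph{every} column of $G$, which is why the argument uses that $r$ is a uniform lower bound on all column multiplicities of $G'$ (equivalently of $\tilde G$), and not merely the value attained at the chosen column $g'$.
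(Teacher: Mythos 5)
Your proposal is correct and follows exactly the route the paper intends: the corollary is derived from Lemma~\ref{lemma_shortening} by choosing $g'$ to be a column of minimum multiplicity $r$ and then using the same observation made at the end of that lemma's proof, namely that removing the row (and passing to a systematic generator matrix) cannot decrease any column multiplicity, so the uniform lower bound $r$ on the multiplicities of $G'$ carries over to $G$. The paper leaves this unstated, and your explicit check that the bound holds for \emph{every} column of $G$ is the right point to make precise.
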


Corollary~\ref{cor_shortening} has multiple algorithmic implications. If we want to classify all $[n,k,W]_q$ codes, then we need the complete lists of $[\le n-1,k-1,W]_q$ codes,
where $[\le n',k',W'_q]$ codes are those with an effective length of at most $n'$. Given an $[n',k-1,W]_q$ code with $n'\le n-1$ we only need to extend those codes which
have a minimum column multiplicity of at least $n-n'$ via Lemma~\ref{lemma_ILP}. If $n-n'>1$ this usually reduces the list of codes, where an extensions needs to be computed.
Once the set $\mathcal{S}(G)$ of feasible solutions is given, we can also sift out some solutions before applying the equivalence sifting step. Corollary~\ref{cor_shortening}
allows us to ignore all resulting codes which have a minimum column multiplicity strictly smaller than $n-n'$. Note that when we know $x_{P}>0$, which we do know e.g.\
for $P=\langle e_i\rangle$, where $1\le i\le k+1$, then we can add the valid inequality $x_P\ge n-n'$ to the inequality system from Lemma~\ref{lemma_ILP}. We call the application
of the extension step of Lemma~\ref{lemma_ILP} under these extra assumptions \emph{canonical length extension} or \emph{canonical lengthening}.

As an example we consider the $[7,2]_2$ code that arises from two codewords of Hamming weight $4$ whose support intersect in cardinality $1$, i.e., their sum has Hamming weight $6$.
A direct construction gives the generator matrix
$$
 G_1=
  \begin{pmatrix}
    1 & 1 & 1 & 1 & 0 & 0 & 0\\
    0 & 0 & 0 & 1 & 1 & 1 & 1
  \end{pmatrix},
$$
which can be transformed into
$$
 G_2=
  \begin{pmatrix}
    1 & 1 & 1 & 1 & 0 & 0 & 0\\
    1 & 1 & 1 & 0 & 1 & 1 & 1
  \end{pmatrix}.
$$
Now column permutations are necessary to obtain a systematic generator matrix
$$
 G_3=
  \begin{pmatrix}
    1 & 0 & 0 & 0 & 1 & 1 & 1 \\
    0 & 1 & 1 & 1 & 1 & 1 & 1
  \end{pmatrix}.
$$
Note that $G_2$ and $G_3$ do not generate the same but only equivalent codes. Using the canonical length extension the systematic
generator matrix
$$
 G_0=
  \begin{pmatrix}
    1 & 1 & 1 & 1
  \end{pmatrix}
$$
of a single codeword of Hamming weight $4$ cannot be extended to $G_3$, since we would need to choose $r=3$ to get from a $[4,1]_2$ code to a
$[7,2]_2$ code, while the latter code has a minimum column multiplicity of $1$. However, the unique codeword with Hamming weight $6$ and
systematic generator matrix
$$
  G=
  \begin{pmatrix}
    1 & 1 & 1 & 1 & 1 & 1
  \end{pmatrix}
$$
can be extended to
$$
 G_4=
  \begin{pmatrix}
    1 & 0 & 1 & 1 & 1 & 1 & 1 \\
    0 & 1 & 0 & 0 & 1 & 1 & 1
  \end{pmatrix},
$$
which generates the same code as $G_3$. So, we needed to consider an extension of a $[6,1]_2$ code to a $[7,2]_2$ code. Now let us dive into the
details of the integer linear programming formulation of Lemma~\ref{lemma_ILP}. In our example we have $k=1$ and $q=2$, so that
$\mathcal{P}_1=\left\{\langle (1)\rangle\right\}$, and
$$
  \mathcal{P}_2=\left\{
  \left\langle\begin{pmatrix}1\\0\end{pmatrix}\right\rangle,\left\langle\begin{pmatrix}0\\1\end{pmatrix}\right\rangle,\left\langle\begin{pmatrix}1\\1\end{pmatrix}\right\rangle
  \right\}.
$$
The multiplicities corresponding to the columns of $G$ and $r$ are given by
$$
  c(\langle(1)\rangle)=6\quad\text{and}\quad c(\langle(0)\rangle)=1.
$$
Due to constraint~(\ref{eq_c_sum}) we have
$$
  x_{\langle e_1\rangle}+x_{\langle e_1+e_2\rangle} =6\quad\text{and}\quad x_{\langle e_2\rangle}=1.
$$
Constraint~(\ref{eq_systematic}) reads
$$
  x_{\langle e_1\rangle}\ge 1\quad\text{and}\quad x_{\langle e_2\rangle}\ge 1.
$$
In order to write down constraint~(\ref{eq_hyperplane}), we need to specify the set $W$ of allowed weights. Let us choose $W=\{4,6\}$, i.e., $\Delta=2$, $a=2$, and $b=3$.
If we label the hyperplanes by $\mathcal{H}=\left\{1,2,3\right\}$, for the ease of notation, we obtain
\begin{eqnarray*}
  2y_1+x_{\langle e_2\rangle} &=& 3,\\
  2y_2+x_{\langle e_1+e_2\rangle} &=& 3,\text{ and}\\
  2y_3+x_{\langle e_1\rangle} &=& 3.
\end{eqnarray*}
Since the $y_i$ are in $\{0,1\}$ we have $x_{\langle e_1\rangle}\le 3$ and $x_{\langle e_1+e_2\rangle}\le 3$, so that
$x_{\langle e_1\rangle}= 3$ and $x_{\langle e_1+e_2\rangle}=3$. The remaining variables are given by $x_{\langle e_2\rangle}=1$, $y_1=1$, $y_2=0$, and $y_3$.
Thus, in our example there is only one unique solution, which then corresponds to generator matrix $G_4$ (without specifying the exact ordering of the columns
of $G_4$).

Note that for the special situation $k+1=2$, every hyperplane of $\mathcal{P}_2$ consists of a unique point. The set of column or point multiplicities is left invariant
by every isometry of a linear code. For hyperplanes in $\operatorname{PG}(k+1,\mathbb{F}_q)$ or non-zero codewords of $C'$ a similar statement applies. To this end we
introduce the weight enumerator $w_C(x)=\sum_{i=0}^n A_ix^i$ of a linear code $C$, where $A_i$ counts the number of codewords of Hamming weight exactly $i$ in $C$. Of
course, the weight enumerator $w_C(x)$ of a linear code $C$ does not depend on the chosen generator matrix $C$. The geometric reformulation uses the number $a_i$
of hyperplanes $H\in\mathcal{H}_k$ with $\# H\cap \mathcal{M}:=\sum_{P\in\mathcal{P}_k\,:\,P\in \mathcal{M},\,P\le H} m(P)=i$. The counting vector $\left(a_0,\dots, a_n\right)$
is left unchanged by isometries. One application of the weight enumerator in our context arises when we want to sift out equivalent copies from a list $\mathcal{C}$ of
linear codes. Clearly, two codes whose weight enumerators do not coincide, cannot be equivalent. So, we can first split $\mathcal{C}$ according to the occurring
different weight enumerators and then apply one of the mentioned algorithms for the equivalence filtering on the smaller parts separately. We can even refine this
invariant a bit more. For a given $[n,k]_q$ code $C$ with generator matrix $G$ and corresponding multiset $\mathcal{M}$ let $\widetilde{\mathcal{M}}$ be the set of
different elements in $\mathcal{M}$, i.e., $\#\mathcal{M}=\sum_{P\in\widetilde{\mathcal{M}}} m(P)$, which means that we ignore the multiplicities in
$\widetilde{\mathcal{M}}$. With this we can refine Lemma~\ref{lemma_shortening}:

\begin{lemma}
  \label{lemma_shortening_refined}
  Let $C$ be an $[n,k,W]_q$ code with generator matrix $G$ and $\mathcal{M}$, $\widetilde{\mathcal{M}}$ as defined above. For each $P\in \widetilde{\mathcal{M}}$
  there exists a generator matrix $G_P$ of an $[n-m(P),k-1]_q$ code such that the extension of $G_P$ via Lemma~\ref{lemma_ILP} yields at least one code that is equivalent to $C$.
\end{lemma}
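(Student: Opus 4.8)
The plan is to observe that Lemma~\ref{lemma_shortening_refined} is nothing more than Lemma~\ref{lemma_shortening} re-indexed by the distinct points of $\widetilde{\mathcal{M}}$, so I would run the same shortening construction once for each $P\in\widetilde{\mathcal{M}}$. Concretely, fix $P\in\widetilde{\mathcal{M}}$ and put $r:=m(P)$. Since $P$ occurs in $\mathcal{M}$, the generator matrix $G$ has exactly $r\ge 1$ columns with row span $P$, which are pairwise scalar multiples of one another; pick one of them and call it $g'$. Applying Lemma~\ref{lemma_shortening} to $C$ with this column $g'$ of multiplicity $r=m(P)$ directly produces a generator matrix $G_P$ of an $[n-m(P),k-1,W]_q$ code whose extension via Lemma~\ref{lemma_ILP} yields at least one code equivalent to $C$, which is exactly the assertion (note that the statement only records $[n-m(P),k-1]_q$, but the residual weights still lie in $W$).

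For completeness I would also recall the construction inside that application: choose $T\in\GL(k,q)$ with $Tg'=e_j$ for some $1\le j\le k$, so that $\tilde G:=TG$ is a generator matrix of $C$ in which all $r$ columns of row span $P$ are scalar multiples of $e_j$; delete those $r$ columns together with the $j$th row to obtain a $(k-1)\times(n-r)$ matrix $\hat G$; every codeword of $\langle\hat G\rangle$ is the restriction, to the surviving $n-r$ coordinates, of a codeword of $C$ whose $j$th information coordinate vanishes (and which is zero on the deleted coordinates as well), so the two weights agree and hence all non-zero weights of $\langle\hat G\rangle$ lie in $W$; finally take $G_P$ to be a systematic generator matrix of the code generated by $\hat G$, obtained by row operations and a column permutation, and feed it into Lemma~\ref{lemma_ILP} with the parameter $r=m(P)$.

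There is essentially no obstacle here; the only point worth a line is that $G_P$ is merely a systematic representative of $\langle\hat G\rangle$, so the extension supplied by Lemma~\ref{lemma_ILP} need not be $\tilde G$ on the nose but only a column permutation of it, hence a code equivalent to (rather than identical with) $C$ --- which is all that is claimed. The degenerate case $k=1$, where $\widetilde{\mathcal{M}}$ consists of the single point of multiplicity $n$ and the shortened code is the trivial $[0,0]_q$ code, is subsumed by the remark on $k'=1$ codes made just before the lemma.
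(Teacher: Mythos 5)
Your proposal is correct and follows essentially the same route as the paper: the paper states Lemma~\ref{lemma_shortening_refined} without a separate proof, treating it as an immediate consequence of Lemma~\ref{lemma_shortening} applied with $g'$ a column of row span $P$ and $r=m(P)$ for each $P\in\widetilde{\mathcal{M}}$, which is exactly what you do. Your recapitulation of the shortening construction (transforming $g'$ to $e_j$, deleting the $r$ columns and the $j$th row, checking the residual weights stay in $W$) matches the paper's proof of Lemma~\ref{lemma_shortening} step for step.
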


Now we can use the possibly different weight enumerators of the subcodes generated by $G_P$ to distinguish some of the extension paths.

\begin{corollary}
  \label{cor_shortening_refined}
  Let $C'$ be an $[n',k',W]_q$ code with generator matrix $G'$, minimum column multiplicity $r$, and $\mathcal{M}$, $\widetilde{\mathcal{M}}$ as defined above. Then there exists a
  generator matrix $G$ of an $[n'-r,k'-1,W]_q$ code $C$ such that the extension of $G$ via Lemma~\ref{lemma_ILP} yields at least one code that is equivalent to $C'$ and
  the weight enumerator $w_C(x)$ is lexicographically minimal among the weight enumerators $w_{C_P}(x)$ for all $P\in\widetilde{\mathcal{M}}$ with column multiplicity $r$ in $C'$,
  where $C_P$ is the linear code generated by the generator matrix $G_P$ from Lemma~\ref{lemma_shortening_refined}.
\end{corollary}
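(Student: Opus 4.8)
The plan is to deduce the statement immediately from Lemma~\ref{lemma_shortening_refined} by selecting, among a finite list of admissible shortenings, the one whose weight enumerator is lexicographically smallest. First I would record that the set $\mathcal{Q}:=\{P\in\widetilde{\mathcal{M}}\,:\,m(P)=r\}$ is non-empty: since $r$ is the minimum column multiplicity of $G'$, there is a column of $G'$ whose row span $P$ satisfies $m(P)=r$, and this $P$ lies in $\widetilde{\mathcal{M}}$. Moreover $\mathcal{Q}$ is finite because $\widetilde{\mathcal{M}}\subseteq\mathcal{P}_{k'}$.

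Next, for each $P\in\mathcal{Q}$ I would invoke Lemma~\ref{lemma_shortening_refined} to obtain a generator matrix $G_P$ of an $[n'-m(P),k'-1]_q=[n'-r,k'-1]_q$ code $C_P$ such that the extension of $G_P$ via Lemma~\ref{lemma_ILP}, with the parameter chosen to be $r=m(P)$, yields at least one code equivalent to $C'$. As in the proof of Lemma~\ref{lemma_shortening}, the matrix $G_P$ arises from a transform of $G'$ by deleting the $m(P)$ columns with a fixed row span $\langle e_j\rangle$ together with the $j$th row, so every non-zero weight of $C_P$ already occurs as a weight in $C'$; hence $C_P$ is in fact an $[n'-r,k'-1,W]_q$ code. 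In particular all codes $C_P$ with $P\in\mathcal{Q}$ have the same length $n'-r$ and dimension $k'-1$, so their weight enumerators $w_{C_P}(x)=\sum_{i=0}^{n'-r}A_i x^i$ are comparable via the lexicographic order on their coefficient vectors.

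Finally, I would pick $P^*\in\mathcal{Q}$ for which $w_{C_{P^*}}(x)$ is lexicographically minimal, breaking ties arbitrarily; such a $P^*$ exists since $\mathcal{Q}$ is finite and non-empty. Setting $G:=G_{P^*}$ and $C:=C_{P^*}$, the code $C$ is an $[n'-r,k'-1,W]_q$ code, the extension of $G$ via Lemma~\ref{lemma_ILP} produces at least one code equivalent to $C'$, and by the choice of $P^*$ the weight enumerator $w_C(x)$ is lexicographically minimal among $w_{C_P}(x)$ for all $P\in\widetilde{\mathcal{M}}$ with column multiplicity $r$ in $C'$. This is exactly the assertion.

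There is no genuine obstacle: essentially all of the content already resides in Lemma~\ref{lemma_shortening_refined}, and the corollary is just the observation that one may optimize over the finitely many candidate points. The only places that deserve a line of care are the non-emptiness of $\mathcal{Q}$ (i.e.\ that the minimum column multiplicity $r$ is actually attained as a column multiplicity), the fact that each $C_P$ with $P\in\mathcal{Q}$ genuinely has effective length $n'-r$ and non-zero weights contained in $W$, and that the comparison of weight enumerators is well posed because all these shortened codes live in the same ambient space.
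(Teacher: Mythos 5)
Your proof is correct and follows exactly the route the paper intends: the corollary is stated there without a separate proof, as an immediate consequence of Lemma~\ref{lemma_shortening_refined} obtained by selecting, among the finitely many points $P\in\widetilde{\mathcal{M}}$ of minimum column multiplicity $r$, one whose shortened code $C_P$ has lexicographically minimal weight enumerator. Your added checks (non-emptiness of the candidate set, that each $C_P$ is genuinely an $[n'-r,k'-1,W]_q$ code as in the proof of Lemma~\ref{lemma_shortening}, and well-posedness of the comparison) are exactly the right points to verify.
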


We remark that the construction for subcodes, as described in Lemma~\ref{lemma_shortening_refined}, can also be applied for points
$P\in\mathcal{P}_k\backslash\mathcal{M}$. And indeed, we obtain an $[n-m(P),k-1]_q=[n,k-1]_q$ code, i.e., the effective length does not decrease, while the dimension decreases
by one.

The algorithmic implication of Corollary~\ref{cor_shortening_refined} is the following. Assume that we want to extend an $[n,k,W]_q$ code $C$ with generator matrix
$G$ to an $[n+r,k+1,W]_q$ code $C'$ with generator matrix $G'$. If the minimum column multiplicity of $C$ is strictly smaller than $r$, then we do not need to
compute any extension at all. Otherwise, we compute the set $\mathcal{S}(G)$ of solutions according to Lemma~\ref{lemma_ILP}. If a code $C'$ with generator matrix
$G'$, corresponding to a solution in $\mathcal{S}(G)$, has a minimum column multiplicity which does not equal $r$, then we can skip this specific solution. For all other
candidates let $\overline{\mathcal{M}}\subseteq \mathcal{P}_{k+1}$ the set of all different points spanned by the columns of $G'$ that have multiplicity exactly $r$.
By our previous assumption $\overline{\mathcal{M}}$ is not the empty set. If $w_C(x)$ is the lexicographically minimal weight enumerator among all weight enumerators $w_{C_P}(x)$,
where $P\in\overline{\mathcal{M}}$ and $C_P$ is generated by the generator matrix $G_P$ from Lemma~\ref{lemma_shortening_refined}, then we store $C'$ and skip it
otherwise. We call the application of the extension step of Lemma~\ref{lemma_ILP} under these extra assumptions \emph{lexicographical extension} or \emph{lexicographical lengthening}.

Lexicographical lengthening drastically decrease the ratio between the candidates of linear codes that have to be sifted out and the resulting number of non-equivalent codes.
This approach also allows parallelization of our enumeration algorithm, i.e., given an exhaustive list $\mathcal{C}$ of all $[n,k,W]_q$ codes and an integer $r\ge 1$, we can
split $\mathcal{C}$ into subsets $\mathcal{C}_1,\dots,\mathcal{C}_l$ according to their weight enumerators. If the $[n+r,k+1,W]_q$ code $C'$ arises by lexicographical lengthening
from a code in $\mathcal{C}_i$ and the $[n+r,k+1,W]_q$ code $C''$ arises by lexicographical lengthening from a code in $\mathcal{C}_j$, where $i\neq j$, then $C'$ and $C''$
cannot be equivalent. As an example, when constructing the even $[21,8,6]_2$ codes from the $17\,927\,353$ $[20,7,6]_2$ codes, we can split the construction into more than
$1000$ parallel jobs. If we do not need the resulting list of $1\,656\,768\,624$ linear codes for any further computations, there is no need to store the complete list of codes
during the computation.

\section{Numerical results}
\label{sec_results}

We implemented the presented algorithms in the programs \texttt{Generation} and \texttt{LinCode}.
These algorithms can be used to classify linear codes with wide-range parameters, for example, for binary codes with lengths up to 100 or more depending on the dimension. The main objectives in this section are three: (1) to show what problems in Coding Theory can be attacked with the software presented, (2) to solve given classification problems, and (3) to show what these results can be useful for (Subsection~\ref{subsec_applications}). The presented classification results are of three types: (1) binary codes with prescribed minimum distance; (2) divisible codes over fields with 2, 3 and 4 elements; (3) self-orthogonal codes.

As the implementation of a practically efficient algorithm for the classification of linear codes is a delicate issue, we exemplarily verify
several classification results from the literature. Efficiency is demonstrated by partially extending some of these enumeration results.

\subsection{Results.}

In \cite[Research Problem 7.2]{kaski2006classification} the authors ask for the classification of $[n,k,3]_2$ codes for $n>14$. In Table~\ref{tab_n_k_3_16}
we extend their Table 7.7 to $n\le 18$.

\begin{table}[htp]
  \begin{center}
  {\footnotesize
    \begin{tabular}{r|rrrrrrrrrr}
      \hline
      $n/k$ & 2 & 3 & 4 & 5 & 6 & 7 & 8 & 9 & 10\\
      \hline
       5 & 1 \\
       6 & 3 & 1 \\									
       7 & 4 & 4 & 1 \\								
       8 & 6 & 10 & 5 \\								
       9 & 8 & 23 & 23 & 5 \\							
      10 & 10 & 42 & 76 & 41 & 4 \\						
      11 & 12 & 71 & 207 & 227 & 60 & 3 \\					
      12 & 15 & 115 & 509 & 1012 & 636 & 86 & 2 \\				
      13 & 17 & 174 & 1127 & 3813 & 4932 & 1705 & 110 & 1 \\			
      14 & 20 & 255 & 2340 & 12836 & 31559 & 24998 & 4467 & 127 & 1 \\		
      15 & 23 & 364 & 4606 & 39750 & 176582 & 293871 & 132914 & 11507 & 143 \\	
      16 & 26 & 505 & 8685 & 115281 & 896316 & 2955644 & 3048590 & 733778 & 28947 \\
      17 & 29 & 686 &15797 & 317464 &4226887 & 26590999 & 58085499 & 34053980 &4115973 \\
      18 & 33 & 919 &27907 & 837697 &18807438& 220135857& 971007974& 1261661451& 393087258\\
      \hline
      \hline
      $n/k$ & 11 & 12 & 13\\
      \hline
      15 &  1 \\	
      16 &  144\\
      17 & 70455 & 129 \\
      18 & 27333440 &293458 & 226\\
      \hline
    \end{tabular}}
    \smallskip
    \caption{The number of inequivalent $[n,k,3]_2$ codes for $n\le 18$}
    \label{tab_n_k_3_16}
  \end{center}
\end{table}

Blank entries correspond to the non-existence of any code with
these parameters, i.e., there is no $[4,2,3]_2$ code and also no $[16,12,3]_2$ code. Obviously, there is a unique $[n,1,3]_2$ codes for each $n\ge 3$ and
it is not too hard to show that the number of inequivalent $[n,2,3]_2$ codes is given by
$
  \left\lceil\sqrt{\frac{(n-4)(n-3)(2n-7)}{6}}\,\right\rceil
$
for each $n\ge 3$. 

\begin{table}[htp]
  \begin{center}
    \begin{tabular}{c|ccccccccccc}
      \hline
      $k$ &
      4 & 5 & 6 & 7 & 8 & 9 & 10 & 11 & 12 & 13 \\
      \hline
      \# &
      8561\!&\!129586\!&\!1813958\!&\!16021319\!&\!60803805\!&\!73340021\!&\!22198835\!&\!1314705\!&\!11341\!&\!24 \\
      \hline
    \end{tabular}
    \smallskip
    \caption{The number of inequivalent even $[n\le 19,k,4]_2$ codes for $4\le k\le 13$}
    \label{tab_n_k_4_even}
  \end{center}
\end{table}

There are possibilities for different restrictions for the codes in addition to the restrictions on length, dimension, minimum and dual distances. We apply also restrictions on the orthogonality and weights of the codewords in some examples.

We present the counts for the even $[\le 19,k,4]_2$ codes in Table~\ref{tab_n_k_4_even}. The numbers of the inequivalent even $[n\le 21,k,6]_2$ codes are presented in
Table~\ref{tab_n_k_6_even} (excluding the
enumeration of the even $[21,9,6]_2$ codes because their number is extremely huge).
We have verified these results by both software programs.


\begin{table}[htp]
  \begin{center}
    \begin{tabular}{c|ccccccccccc}
      \hline
      $k$ & 3 & 4 & 5 & 6 & 7 & 8 & 10 & 11  \\
      \hline
      \# & 726 & 12817 & 358997 & 11697757 & 246537467 & 1697180017 & 62180809 & 738 \\
      \hline
    \end{tabular}
    \smallskip
    \caption{The number of even $[\le 21,k,6]_2$ codes for $3\le k\le 11$, $k\neq 9$}
    \label{tab_n_k_6_even}
  \end{center}
\end{table}

\begin{table}[htp]
  \begin{center}
{\footnotesize
\begin{tabular}{c|ccccc}
\hline\noalign{\smallskip}
$n\setminus k$ & 2 & 3&4&5&6  \\
\noalign{\smallskip}\hline\noalign{\smallskip}
12 & 1 &    &     &     &\\
13 &   &  1 &     &     &\\
18 & 1 &    &     &     &\\
21 & 1 &  1 &     &     &\\
22 &   &  1 &   1 &     &\\
24 & 1 &  1 &   1 &     &\\
25 &   &  1 &   1 &   1 &\\
26 &   &  1 &   1 &   1 &   1\\
27 & 2 &  3 &   3 &   1 &\\
30 & 2 &  4 &   3 &     &\\
31 &   &  2 &   3 &   1 &\\
33 & 1 &  5 &   5 &   3 &\\
34 &   &  2 &   5 &   4 &   1\\
35 &   &  1 &   4 &   4 &   3\\
36 & 4 & 10 &  22 &  13 &   4\\
37 &   &  2 &   7 &  10 &   3\\
38 &   &  1 &   6 &  12 &  10\\
39 & 3 & 15 &  34 &  41 &  23\\
40 &   &  6 &  25 &  40 &  30\\
41 & 0 &  0 &  0 &  0 &  0\\
42 & 2 & 17 &  52 &  44 &  15\\
43 &   &  6 &  32 &  40 &  16\\
44 &   &  2 &  14 &  22 &  17\\
45 & 5 & 31 & 141 & 190 &  72\\
46 &   &  6 &  56 & 122 &  71\\
47 &   &  2 &  29 &  92 &  89\\
48 & 5 & 44 & 297 & 705 & 468\\
49 &   & 15 & 177 & 613 & 596\\
50 &   &  2 &  39 & 217 & 295\\
\noalign{\smallskip}\hline\noalign{\smallskip}
total & 28 & 182&958&2176&1714 \\                                                                                                                                                                                                                                                                                                                                                                                                                                                                                                                                                \noalign{\smallskip}\hline
\end{tabular}}
\smallskip
\caption{Divisible ternary codes with $n\le 50$, $k\le 6$, $\Delta=9$}
\label{tab_n_k_9_div_q_3}       
\end{center}
\end{table}

Moreover, we have enumerated the divisible codes with given parameters over fields with 2, 3 and 4 elements. Recently, 8-divisible binary codes (called also triply even) have been investigated \cite{triply-Munemasa,no59}.
In \cite{ubt_eref40887}, it is proven that projective triply-even binary
codes exist precisely for lengths 15, 16, 30, 31, 32, $45-51$, and $\ge 60$. We have verified the enumeration of the projective $2$, $4$-, and $8$-divisible binary linear codes from \cite{ubt_eref40887}.

We have classified 9-divisible ternary codes and 4-divisible quaternary codes.

\begin{itemize}
\item $q=3$, $\Delta=9$. Table~\ref{tab_n_k_9_div_q_3} contains classification results for codes of this type with length $n\le 55$ and dimension $k\le 8$. The conspicuous zero row
for length $n=41$ has a theoretical explanation, i.e., there is no $9$-divisible $[41,k]_3$ code at all, see \cite[Theorem 1]{divisibleIEEE}.\footnote{More precisely,
$41=2\cdot 13+2\cdot 12-1\cdot 9$ is a certificate for the fact that such a code does not exist, see \cite[Theorem 1, Example 6]{divisibleIEEE}.}


\item $q=4$, $\Delta=4$. Table \ref{table-q4-n30} presents classification results for codes with $n\le 30$ and $k\le 8$. All constructed codes are Hermitian self-orthogonal.
\end{itemize}

\begin{table}[htp]
  \begin{center}
{\footnotesize
\begin{tabular}{c|ccccccc}
\hline\noalign{\smallskip}
$n\setminus k$ & 2 & 3&4&5&6&7&8  \\
\noalign{\smallskip}\hline\noalign{\smallskip}
5  & 1 &     &      &     &&&\\
8  & 1 &     &      &     &&&\\
9  & 1 &   1 &      &       &    &&\\
10 & 1 &   1 &    1 &       &    &&\\
12 & 2 &   2 &      &       &    &&\\
13 & 2 &   3 &    1 &       &    &&\\
14 & 1 &   5 &    3 &     1 &    &&\\
15 & 1 &   3 &    6 &     2 &     1 &&\\
16 & 4 &   9 &    7 &     2 &       &&\\
17 & 3 &  12 &    9 &     2 &       &&\\
18 & 2 &  18 &   25 &     8 &     1 &&\\
19 & 1 &  14 &   42 &    25 &     6 &     1 &\\
20 & 6 &  34 &   93 &    70 &    22 &     4 &1\\
21 & 5 &  45 &  115 &    75 &    19 &     2 &\\
22 & 3 &  64 &  245 &   131 &    23 &     2 &\\
23 & 2 &  62 &  554 &   398 &    96 &    12 &     1\\
24 & 9 & 123 & 1509 &  1769 &   491 &    79 &     9\\
25 &   & 168 & 3189 &  6890 &  1842 &   334 &    46\\
26 &   &     & 8420 & 18377 &  2691 &   360 &    33\\
27 &   &     &      & 70147 &  4602 &   458 &    34\\
28 &   &     &      &       & 36982 &  3075 &   244\\
29 &   &     &      &       &       & 34180 &  2366\\
30 &   &     &      &       &       &       & 24565\\                                                                                                                                                                                                                                                                                                                                                                                                                                                                                                                            \noalign{\smallskip}\hline\noalign{\smallskip}
total &  45& 564  &14219  & 97897 & 46776 & 38507 & 27299  \\                                                                                                                                                                                                                                                                                                                                                                                                                                                                                                                                                \noalign{\smallskip}\hline
\end{tabular}}
\smallskip
\caption{Divisible quaternary codes with $n\le 30$, $k\le 8$, $\Delta=4$}
\label{table-q4-n30}
\end{center}
\end{table}

 By the program \texttt{Generation}, we have also classified binary, ternary and quaternary self-orthogonal codes. There are a few tables of self-orthogonal codes (see \cite{BBGO,IliyaPatric}). Here we present classification results that are not given in these tables, namely:
\begin{itemize}
\item We present classification results for binary self-orthogonal $[27,k\le 12,d\ge 8]$ codes with dual distance $d^\perp\ge 1$ in Table \ref{table-q2-n27}. The codes with dimensions 11 and 12 are optimal as linear codes, and the codes with $k=9$ and 10 are optimal only as self-orthogonal \cite{BBGO}. Moreover, we tried to fill some of the gaps in \cite[Table 1]{BBGO}. We classified the $n$-optimal self-orthogonal $[n,k,d]$ codes (the codes for which no $[n-1,k,d]$ self-orthogonal code exists) with parameters $[35,8,14]$, $[29,9,10]$ and $[30,10,10]$. The number of codes in these cases are 376, 36504 and 573, respectively. Our program shows that no self-orthogonal $[37,10,14]$ and $[36,9,14]$ codes exist which means that the maximal possible minimum distance for self-orthogonal codes with these lengths and dimensions is 12.

\begin{table}[htp]
\begin{center}
\begin{tabular}{c|cccccc}
\hline\noalign{\smallskip}
k & 2 & 3&4&5&6&7  \\
\noalign{\smallskip}\hline\noalign{\smallskip}
total & 59  & 445 &4615  & 64715  & 959533&8514764 \\
\noalign{\smallskip}\hline\noalign{\smallskip}
k & 8 & 9&10&11&12&  \\
\noalign{\smallskip}\hline\noalign{\smallskip}
total &  21256761 &7030920  &159814  &791   &18 & \\
\noalign{\smallskip}\hline
\end{tabular}
\smallskip
\caption{Binary self-orthogonal $[27,k\le 12,d \ge 8]d^\perp\ge 1$ codes}
\label{table-q2-n27}       
\end{center}
\end{table}


\item  The classification results for ternary self-orthogonal $[n\le 20,k\le 10,d\ge 6]$ codes are given in Table \ref{table-q3-n20all}. This table supplements \cite[Table 1]{IliyaPatric}.


\begin{table}[htp]
\begin{center}
\begin{tabular}{c|ccccccc}
\hline\noalign{\smallskip}
$n\setminus k$ &4&5&      6&       7&     8&   9&10\\
\noalign{\smallskip}\hline\noalign{\smallskip}
10   &1   &       &        &        &      &    &\\
11   &1   &      1&        &        &      &    & \\
12   &6   &      2&       1&        &      &    & \\
13   &10  &      4&       1&        &      &    & \\
14   &27  &     15&       4&        &      &    & \\
15   &78  &     73&      20&       2&      &    &\\
16   &181 &    312&     121&      11&     1&    & \\
17   &414 &   1466&     885&      86&     2&    & \\
18   &1097&   8103&   10808&    1401&    40&    &\\
19   &2589&  47015&  167786&   45950&  1132&  10& \\
20   &6484& 285428& 2851808& 2121360& 89670& 464& 6 \\
\noalign{\smallskip}\hline
\end{tabular}
\smallskip
\caption{Ternary self-orthogonal codes with $n\le 20$, $k\le 10$, and $d\ge 6$}
\label{table-q3-n20all}       
\end{center}
\end{table}

\item Table \ref{table-q4-n21} shows the classification of the $[n\le 21,\le 6,12]$ quaternary Hermitian self-orthogonal codes. These results fill some of the gaps in \cite[Table 2]{IliyaPatric}.
\end{itemize}

\begin{table}[htp]
\begin{center}
\begin{tabular}{c|ccccc}
\hline\noalign{\smallskip}
$n\setminus k$ & 2 & 3&4&5&6  \\
\noalign{\smallskip}\hline\noalign{\smallskip}
15&                   1&&&\\
16&                   2&         1&&&\\
17&                   3&         4&         1&&\\
18&                    &        45&        12&&\\
19&                    &          &      5673&&\\
20&                    &          &          &  886576&\\
21&                    &         &&                    &   577008\\                                                                                                                                                                                                                                                                                                                                                                                                                                                                                                                                                  \noalign{\smallskip}\hline
\end{tabular}
\smallskip
\caption{Quaternary Hermitian self-orthogonal codes with $n\le 21$, $k\le 6$, $d=12$}
\label{table-q4-n21}       
\end{center}
\end{table}

\subsection{Applications}
\label{subsec_applications}

In this subsection we want to exemplarily show up, that exhaustive enumeration results of
linear codes can of course be used to obtain results for special subclasses of codes and
their properties by simply checking all codes. For our first example we remark that the
support of a codeword is the set of its non-zero coordinates. A non-zero codeword $c$ of a
linear code $C$ is called minimal if the support of no other non-zero codeword is contained
in the support of $c$, see e.g.~\cite{ashikhmin1998minimal}. By $m_2(n,k)$ we denote the minimum
number of minimal codewords of a projective\footnote{Duplicating columns in a binary
linear code generated by the $k\times k$ unit matrix results in exactly $k$ minimal codewords, which
is the minimum for all $k$-dimensional codes.} $[n,k]_2$ code. In Table~\ref{tab_minimal_codewords}
we state the exact values of $m_2(n,k)$ for all $2\le k\le n\le 15$ obtained by enumerating all
projective codes with these parameters.

\begin{table}[htbp]
\begin{center}
{\small
\begin{tabular}{|c|c|c|c|c|c|c|c|c|c|c|c|c|c|c|}\hline
$n/k$ & 2 & 3   & 4     & 5     & 6     & 7     & 8     & 9  & 10 & 11 & 12 & 13 & 14 & 15 \\\hline
3     & 3 & 3   &       &       &       &       &       &    &    &    &    &    &    & \\\hline
4     &   & 4   & 4     &       &       &       &       &    &    &    &    &    &    & \\\hline
5     &   & 6   & 5     & 5     &       &       &       &    &    &    &    &    &    & \\\hline
6     &   & 7   & 6     & 6     & 6     &       &       &    &    &    &    &    &    & \\\hline
7     &   & 7   & 8     & 7     & 7     & 7     &       &    &    &    &    &    &    & \\\hline
8     &   &     & 8     & 9     & 8     & 8     & 8     &    &    &    &    &    &    & \\\hline
9     &   &     & 12    & 9     & 9     & 9     & 9     & 9 &    &    &    &    &    & \\\hline
10    &   &     & 14    & 10    & 10    & 10    & 10    & 10 & 10 &    &    &    &    & \\\hline
11    &   &     & 14    & 15    & 11    & 11    & 11    & 11 & 11 & 11 &    &    &    & \\\hline
12    &   &     & 15    & 15    & 13    & 12    & 12    & 12 & 12 & 12 & 12 &    &    & \\\hline
13    &   &     & 15    & 16    & 14    & 13    & 13    & 13 & 13 & 13 & 13 & 13 &    & \\\hline
14    &   &     & 15    & 16    & 14    & 15    & 14    & 14 & 14 & 14 & 14 & 14 & 14 & \\\hline
15    &   &     & 15    & 16    & 17    & 15    & 16    & 15 & 15 & 15 & 15 & 15 & 15 & 15\\
\hline\end{tabular}}
\smallskip
\caption{$m_2(n,k)$ for $3\leq n\leq 15, 1\leq k\leq 9$}
\label{tab_minimal_codewords}
\end{center}
\end{table}

\subsection{Verification and computational time.}

We use three basic approaches to verify our programs and the results. The first one is verification by replication. We ran the programs to get already published classification results as the classification of the projective $2$, $4$-, and $8$-divisible binary linear codes from \cite{ubt_eref40887}, and binary projective codes with dimension 6 \cite{bouyukliev2006}, and we obtained the same number of codes.

The second approach is to double check most of the results with the presented programs. The third one is
to use enumeration of different types of codes given by theoretical methods (see \cite{betten2006error,HuffmanPless}). For self-orthogonal codes we can also use mass formulae to verify that the constructed codes represents all equivalence classes of the given length \cite{HuffmanPless}.

All calculations with the program \texttt{Generation} have been done on 2 $\times$ Intel Xeon E5-2620 V4, 32 thread processor. The calculation time strictly depends on the parameters and restrictions of the considered codes that determine the size of the corresponding search tree. For example, the proof that the extended binary Golay code is unique took 0.09s on a single tread. The classification of the binary $[32,16,8]$ self-dual codes took 184.49s also on a single tread.

As already mentioned, the presented algorithms are suitable for parallel implementation. The results in Table \ref{tab_n_k_3_16} were obtained at once in about 3 days on 32 threads. The calculations for Table \ref{tab_n_k_6_even} took about the same time on the same server. All other calculations took from a few minutes to several days on a single core.

For the program \texttt{LinCode} the most time expensive step that was executed a single computer, i.e., extending the $[19,7,6]_2$ codes to $[20,8,6]_2$ codes, took roughly
250~hours of computation time on a single core of a Intel Core i7-7600U laptop with 2.80GHz . We remark that the $[19,k,4]_2$ codes, where $k\in\{7,8,9,10\}$, and
the $[21,k,6]_2$ codes, where $k\in\{7,8,10\}$, were enumerated in parallel, i.e., we have partially used the computing nodes of the
\emph{High Performance Computing Keylab} from the University of Bayreuth. We have used the oldest cluster btrzx5 that went into operation in
2009.\footnote{The precise technical details can be found at \url{https://www.bzhpc.uni-bayreuth.de/de/keylab/Cluster/btrzx5_page/index.html}.}

\section{Conclusion}
\label{sec_conclusion}

The technique of canonical augmentation is used for classification of special types of codes and related combinatorial objects in \cite{bouyukliev2006,BB38,vanEupenLizonek1997,royle1998orderly}, etc. We apply this technique to classify linear codes with given properties in the program \texttt{Generation}.
Our algorithm expands the matrices column by column but starts from the identity $k\times k$ matrix. So it constructs all inequivalent linear $[n,k]_q$ codes without getting codes of smaller dimensions. Restrictions on the dual distance, minimum distance, etc. can be applied.
The algorithm is implemented in the program \texttt{Generation}, which is the first module of the software package \texttt{QextNewEdition}. On the one hand, this program gives us the possibility to classify linear codes with given parameters over fields with $q$ elements. On the other hand, the program can give families of inequivalent codes with certain properties that can be used for extension in length and dimension from the other modules in the package. These modules are also based on the idea of canonical augmentation, which gives the possibility for parallelization.

Moreover, we have presented an algorithm for the classification of linear codes over finite fields based on lattice
point enumeration. The lattice point enumeration itself and sifting out equivalent copies is so far done
with available scientific software packages. Using invariants like the weight enumerator of subcodes, see
Corollary~\ref{cor_shortening_refined}, the number of candidates before sifting could kept reasonably small. The advantage of the canonical augmentation that no pairs of codes have to be checked
whether they are equivalent comes at the cost that the computation of the canonical form is relatively costly. Allowing not only a single canonical extension, but a relatively small
number of extensions that may lead to equivalent codes, might be a practically efficient alternative. We have
also demonstrated that the algorithm can be run in parallel.
However, we think that our implementation can still be further improved. In some cases the used lattice point
enumeration algorithm \texttt{Solvediophant} takes quite long to verify that a certain code does not allow an
extension, while integer linear programming solvers like e.g.~\texttt{Cplex} quickly verify infeasibility.
We propose the extension of Table~\ref{tab_n_k_9_div_q_3}
as a specific open problem. We have
demonstrated that it is indeed possible to exhaustively classify sets of linear codes of magnitude $10^9$, which was
not foreseeable  at the time of \cite{kaski2006classification}.

Currently the implementation of the evolving software package \texttt{LinCode} is not advanced enough to be
made publicly available. So, we would like to ask the readers to sent their interesting enumeration problems
of linear codes directly to the third author. The software package \texttt{QextNewEdition} is available on the web-page\\
\verb"http://www.moi.math.bas.bg/moiuser/~data/Software/QextNewEdition"



\end{document}